\newtheorem{theorem}{Theorem}
\newtheorem{lemma}{Lemma}
\newtheorem{proposition}{Proposition}
\theoremstyle{definition}
\newtheorem{definition}{Definition}
\newcommand{\argmin}{\operatornamewithlimits{arg\,min}}
\newcommand{\R}{\mathbb{R}}
\newcommand{\mc}[1]{\mathcal{#1}}
\newcommand{\T}{^\top}
\newcommand{\atantwo}{\operatorname{atan2}}
\newcommand{\bzero}{\mathbf{0}}
\newcommand{\bb}{\mathbf{b}}
\newcommand{\bc}{\mathbf{c}}
\newcommand{\bd}{\mathbf{d}}
\newcommand{\be}{\mathbf{e}}
\renewcommand{\bf}{\mathbf{f}} 
\newcommand{\bg}{\mathbf{g}}
\newcommand{\bk}{\mathbf{k}}
\newcommand{\br}{\mathbf{r}}
\newcommand{\bu}{\mathbf{u}}
\newcommand{\bv}{\mathbf{v}}
\newcommand{\bx}{\mathbf{x}}
\newcommand{\by}{\mathbf{y}}
\newcommand{\bz}{\mathbf{z}}
\newcommand{\bB}{\mathbf{B}}
\newcommand{\bD}{\mathbf{D}}
\newcommand{\bG}{\mathbf{G}}
\newcommand{\bH}{\mathbf{H}}
\definecolor{myblue}{RGB}{49, 114, 174}
\definecolor{myred}{rgb}{0.796, 0.235, 0.2}
\definecolor{mygreen}{rgb}{0.22, 0.596, 0.149}
\definecolor{mypurple}{rgb}{0.584,0.345,0.698}
\title{\textbf{Compatibility of Multiple Control Barrier Functions \\ for Constrained Nonlinear Systems}}
\author{Max H. Cohen$^1$, Eugene Lavretsky$^2$, and Aaron D. Ames$^1$ %
\thanks{$^1$The authors are with the Department of Mechanical and Civil Engineering, California Institute of Technology, Pasadena, CA \texttt{\{maxcohen,ames\}@caltech.edu}.}
\thanks{$^2$The author is with The Boeing Company, Huntington Beach, CA \texttt{eugene.lavretsky@boeing.com}.}
\thanks{This research was supported by the Boeing Company. 
}
}
\begin{document}
\maketitle
\begin{abstract}
    Control barrier functions (CBFs) are a powerful tool for the constrained control of nonlinear systems; however, the majority of results in the literature focus on systems subject to a single CBF constraint, making it challenging to synthesize provably safe controllers that handle multiple state constraints. 
    This paper presents a framework for constrained control of nonlinear systems subject to box constraints on the systems' vector-valued outputs using multiple CBFs. Our results illustrate that when the output has a vector relative degree, the CBF constraints encoding these box constraints are compatible, and the resulting optimization-based controller is locally Lipschitz continuous and admits a closed-form expression. Additional results are presented to characterize the degradation of nominal tracking objectives in the presence of safety constraints. Simulations of a planar quadrotor are presented to demonstrate the efficacy of the proposed framework.
\end{abstract}

\section{Introduction}
Modern engineering systems -- from autonomous robots to aerospace systems -- are often subject to safety-critical operational requirements. From a controls perspective, these requirements manifest as state constraints where the primary control objective is to ensure that various states evolve in a desired operating region or set of the state space. The past decade has witnessed a surge in constrained control methodologies to address such problems, including, but not limited to, control barrier functions (CBFs) \cite{AmesTAC17}, model predictive control \cite{BorelliBemporadMorari}, reference governors \cite{KolmanovskyAutomatica17}, and Hamilton-Jacobi reachability \cite{BansalCDC17}. A comparison between many of these approaches can be found in \cite{AmesCSM23,FisacARCRAS23}. 

Among the aforementioned approaches, CBFs have gained popularity due to their ease of online implementation and their ability to instantiate \emph{safety filters} -- controllers that modify existing inputs in a minimally invasive fashion to ensure the satisfaction of state constraints \cite{GurrietICCPS18}. A major limitation of CBFs, however, is that the majority of results in the literature focus on systems subject to a single (scalar) state constraint. This is in sharp contrast to most real-world applications where systems are subject to complex safety requirements that are challenging to characterize with a single CBF. Despite the relative lack of theoretical results, using multiple CBFs has demonstrated much practical success \cite{WangTRO17}. 

Early attempts to formally characterize multiple CBFs focused on combining these CBFs into a single CBF candidate \cite{GlotfelterLCSS17,LarsLCSS19,TamasLCSS23}, but did not address the issue of \emph{compatibility}, that is, when these CBFs are mutually feasible. More recently, \cite{BlackCDC23} applied tools from adaptive control to combine multiple CBFs online, which are shown to be compatible under certain assumptions. Other works reason directly about the compatibility of multiple CBFs by focusing on special classes of systems and constraints such as single-input single-output (SISO) systems \cite{XuAutomatica18}, unicycles avoiding circular obstacles \cite{CortezLCSS22}, linear systems \cite{LavretskyACC25}, and fully actuated mechanical systems \cite{CortezAutomatica22}. Alternative approaches to addressing multiple CBFs involve computational techniques based on discretizing the state space \cite{TanCDC22}, viability kernel calculations \cite{BreedenACC23}, and sum of squares programming \cite{IsalyTAC24}. 

In this paper, we present a framework for constrained control of nonlinear systems with multiple state constraints using CBFs. We focus on nonlinear multi-input multi-output (MIMO) systems subject to box constraints on the system's output. Our main results illustrate that when this output has a vector relative degree, then i) the CBFs corresponding to each output constraint are compatible, ii) the resulting quadratic programming (QP)-based controller is locally Lipschitz continuous, and iii) this QP admits a closed-form solution. Furthermore, we characterize how these multi-CBF safety filters degrade nominal performance objectives.

In contrast to techniques that combine multiple CBFs into a single CBF \cite{GlotfelterLCSS17,LarsLCSS19,TamasLCSS23,BlackCDC23}, our framework directly addresses multiple CBFs, ensuring such CBFs are compatible by construction. Compared to related approaches that directly handle multiple CBFs \cite{XuAutomatica18,CortezLCSS22,LavretskyACC25,CortezAutomatica22}, our results apply to a fairly general class of systems that subsume those considered in, e.g., \cite{XuAutomatica18,LavretskyACC25,CortezAutomatica22}. 
Specifically, the present work extends the ideas introduced in \cite{LavretskyACC25}, which focused on LTI systems, to a more general class of square nonlinear systems.
Finally, unlike works such as \cite{TanCDC22,BreedenACC23,IsalyTAC24} our approach does not rely on computationally expensive procedures for verifying compatibility and thus has the potential to scale to higher dimensional systems. It should be noted that such computational approaches are often able to incorporate input bounds whereas our results do not currently account for such bounds. 

The remainder of this paper is organized as: Sec. \ref{sec:prelim} covers preliminaries on CBFs, relative degree, and multiple constraints; Sec. \ref{sec:main} and Sec. \ref{sec:stability} present our main results, which are illustrated via simulations in Sec. \ref{sec:sims}; Sec. \ref{sec:conclusions} concludes with a discussion on future research directions.

\section{Preliminaries}\label{sec:prelim}
\subsection{Notation}
We say that a function $h\,:\,\R^n\rightarrow\R$ is smooth if it is continuously differentiable as many times as necessary. We use $L_{\bf}h(\bx)\coloneqq \pdv{h}{\bx}(\bx)\bf(\bx)$ to denote the Lie derivative of a smooth function $h\,:\,\R^n\rightarrow\R$ along the vector field $\bf\,:\,\R^n\rightarrow\R^n$ with higher order Lie derivatives denoted as $L_{\bf}^ih(\bx)\coloneqq \pdv{L_{\bf}^{i-1}h}{\bx}(\bx)\bf(\bx)$; see \cite[Ch. 4]{Isidori}. We use $\|\bx\|$ to denote the Euclidean norm of a vector $\bx\in\R^n$ and $\|\bx\|_{\bG}\coloneqq\sqrt{\bx\T\bG\bx}$ to denote the weighted Euclidean norm with positive definite $\bG\in\R^{n\times n}$. We denote the cardinality of a set $\mc{A}$ as $|\mc{A}|$. Given $\bu\,:\,\R_{\geq0}\rightarrow\R^m$ we define $\|\bu(t)\|_{\infty}\coloneqq\sup_{t\geq0}\|\bu(t)\|$.

\subsection{Safety-Critical Control}
Consider a nonlinear control affine system with state $\bx\in\R^n$, input $\bu\in\R^m$, and dynamics:
\begin{equation}\label{eq:control-affine-system}
    \dot{\bx} = \bf(\bx) + \bg(\bx)\bu,
\end{equation}
where $\bf\,:\,\R^n\rightarrow\R^n$ models the system drift dynamics and the columns of $\bg\,:\,\R^n\rightarrow\R^{n\times m}$ capture the control directions. Unless otherwise stated, these functions are assumed to be sufficiently smooth on $\R^n$. The main objective of this paper is to design a locally Lipschitz feedback controller $\bk\,:\,\R^n\rightarrow\R^m$ that enforces state constraints, $\bx(t)\in\mc{C}\subset\R^n$ for all $t\geq0$, along trajectories of the closed-loop system. 
A powerful framework for enforcing state constraints on nonlinear systems is via CBFs \cite{AmesTAC17}. To this end, consider a smooth function $h\,:\,\R^n\rightarrow\R$ defining a state constraint set:
\begin{equation}\label{eq:C}
    \mc{C} \coloneqq \{\bx\in\R^n\,:\,h(\bx)\geq0\}.
\end{equation}
CBFs enable the construction of feedback controllers $\bk\,:\,\R^n\rightarrow\R^m$ that render the above set \emph{forward invariant}\footnote{A set $\mc{C}\subset\R^n$ is said to be forward invariant for \eqref{eq:fcl} if for each initial condition $\bx_0\in\mc{C}$, the resulting solution satisfies $\bx(t)\in\mc{C}$ for all $t\geq0$.} for the closed-loop system:
\begin{equation}\label{eq:fcl}
    \dot{\bx} = \bf_{\rm{cl}}(\bx) \coloneqq \bf(\bx) + \bg(\bx)\bk(\bx).
\end{equation}
\begin{definition}[\cite{AmesTAC17}]
    A smooth function $h\,:\,\R^n\rightarrow\R$ defining a set $\mc{C}\subset\R^n$ as in \eqref{eq:C} is said to be a CBF for \eqref{eq:control-affine-system} if there exists $\alpha>0$ such that for all $\bx\in\R^n$:
    \begin{equation}\label{eq:cbf}
        \sup_{\bu\in\R^m}\left\{ L_{\bf}h(\bx) + L_{\bg}h(\bx)\bu \right\} \geq - \alpha h(\bx).
    \end{equation}
\end{definition}
Note that the condition in \eqref{eq:cbf} is equivalent to \cite{jankovic2018robust}:
\begin{equation*}
    L_{\bg}h(\bx)=\bzero \implies L_{\bf}h(\bx) + \alpha h(\bx) \geq 0,
\end{equation*}
for all $\bx\in\R^n$. Yet it may be the case that $L_{\bg}h(\bx)\equiv\bzero$ (i.e., the relative degree of $h$ may be larger than one), and the above condition may not hold, motivating the use of higher order CBF approaches, such as the \emph{exponential} CBF \cite{SreenathACC16}.

\begin{definition}[\cite{SreenathACC16}]
    A smooth function $h\,:\,\R^n\rightarrow\R$ defining a set $\mc{C}\subset\R^n$ as in \eqref{eq:C} is said to be an \emph{exponential CBF} of order $r\in\mathbb{N}$ for \eqref{eq:control-affine-system} if there exists $\bm{\alpha}\in\R^{r}$ such that:
    \begin{equation}\label{eq:ECBF-poly}
        s^{r} + \alpha_rs^{r-1} + \dots + \alpha_2s + \alpha_1 = \prod_{i=1}^{r} (s_{i} - \nu_i),
    \end{equation}
    has negative real roots $(\nu_1,\dots,\nu_{r})\in\R^r$ and for all $\bx\in\R^n$:
    \begin{equation}\label{eq:ECBF}
        \sup_{\bu\in\R^m}\left\{L_{\bf}^rh(\bx) + L_{\bg}L_{\bf}^{r-1}h(\bx)\bu \right\} \geq - \bm{\alpha}\T\bH(\bx),
    \end{equation}
    where $\bH(\bx)\coloneqq(h(\bx),\,L_{\bf}h(\bx),\,\dots\,,\,L_{\bf}^{r-1}h(\bx))$.
\end{definition}
Given an Exponential CBF (ECBF), we recursively define:
\begin{equation}\label{eq:-psi-i}
    \begin{aligned}
        \psi_{0}(\bx) \coloneqq & h(\bx) \\
        \psi_{i}(\bx) \coloneqq & \odv{}{t}\psi_{i-1}(\bx) - \nu_{i}\psi_{i-1}(\bx),\quad \forall i\in\{1,\dots,r\},
    \end{aligned}
\end{equation}
and the associated sets:
\begin{equation}\label{eq:C=i}
    \mc{C}_i = \{\bx\in\R^n\,:\,\psi_{i}(\bx)\geq0\},\quad \forall i\in\{0,\dots,r\}.
\end{equation}
The following theorem captures the main result for ECBFs.
\begin{theorem}[\cite{SreenathACC16}]\label{thm:ECBF}
    If $h$ is an ECBF for \eqref{eq:control-affine-system}, then any locally Lipschitz feedback controller $\bk\,:\,\R^n\rightarrow\R^m$ satisfying:
    \begin{equation}\label{eq:ECBF-control}
        L_{\bf}^rh(\bx) + L_{\bg}L_{\bf}^{r-1}h(\bx)\bk(\bx) \geq - \bm{\alpha}\T\bH(\bx),
    \end{equation}
    for all $\bx\in\R^n$ renders the set $\mc{S}\coloneqq \cap_{i=0}^{r}\mc{C}_i$ forward invariant for the closed-loop system \eqref{eq:fcl}. Hence, if $\bx_0\in\mc{S}$, then $h(\bx(t))\geq0$ for all $t\geq0$.
\end{theorem}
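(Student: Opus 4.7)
The plan is to decompose the argument into two steps: first, reformulate the controller condition \eqref{eq:ECBF-control} as nonnegativity of $\psi_r$ along closed-loop trajectories; second, cascade this nonnegativity down through the recursion \eqref{eq:-psi-i} via a comparison argument to conclude forward invariance of each $\mc{C}_i$, hence of $\mc{S}$.

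For the first step, I would note that the very form of the ECBF inequality \eqref{eq:ECBF} tacitly assumes $h$ has relative degree $r$ for \eqref{eq:control-affine-system}, so $L_{\bg}L_{\bf}^{j}h(\bx)\equiv \bzero$ for $j\in\{0,\dots,r-2\}$. An induction on $i\in\{0,\dots,r-1\}$ then shows that each $\psi_i$ is a function of $\bx$ alone, obtained by applying the differential operator $\prod_{j=1}^{i}(\tfrac{d}{dt}-\nu_j)$ to $h$ along trajectories of \eqref{eq:fcl}; the relative-degree hypothesis kills every intermediate input-dependent term, leaving only linear combinations of $L_{\bf}^{j}h(\bx)$. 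At the top level $i=r$, the lone surviving derivative $L_{\bg}L_{\bf}^{r-1}h$ couples to $\bk(\bx)$, and the polynomial identity \eqref{eq:ECBF-poly} collects the coefficients of the lower-order terms into $\bm{\alpha}\T\bH(\bx)$, yielding
\begin{equation*}
\psi_r(\bx) = L_{\bf}^r h(\bx) + L_{\bg}L_{\bf}^{r-1}h(\bx)\bk(\bx) + \bm{\alpha}\T\bH(\bx).
\end{equation*}
Hence the controller hypothesis \eqref{eq:ECBF-control} is exactly the statement $\psi_r(\bx(t))\geq 0$ along every closed-loop trajectory.

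For the second step, I would perform a backward induction on $i$ from $r$ down to $1$. Rearranging \eqref{eq:-psi-i} gives the scalar identity $\dot{\psi}_{i-1} = \psi_i + \nu_i\psi_{i-1}$ along closed-loop trajectories. Assuming as an inductive hypothesis that $\psi_i(\bx(t))\geq 0$ for all $t\geq 0$ (the base case $i=r$ is step one), we obtain the differential inequality $\dot{\psi}_{i-1}\geq \nu_i\psi_{i-1}$. Multiplying by $e^{-\nu_i t}$ to form an exact derivative and integrating, i.e., invoking the comparison lemma, yields $\psi_{i-1}(\bx(t)) \geq \psi_{i-1}(\bx_0)\,e^{\nu_i t}$. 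Since $\bx_0\in\mc{S}\subseteq\mc{C}_{i-1}$ gives $\psi_{i-1}(\bx_0)\geq 0$, it follows that $\psi_{i-1}(\bx(t))\geq 0$. Iterating down to $i=1$ produces $\bx(t)\in\mc{C}_i$ for every $i\in\{0,\dots,r\}$ on the maximal interval of existence, which is precisely forward invariance of $\mc{S}$; the tail claim $h(\bx(t))\geq 0$ is the $i=0$ case since $\psi_0 = h$.

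The main obstacle is the bookkeeping in step one: one must carefully verify that applying $\prod_{j=1}^{r}(\tfrac{d}{dt}-\nu_j)$ to $h$ along the controlled trajectory collapses, using both the relative-degree hypothesis and the polynomial identity \eqref{eq:ECBF-poly}, to exactly $L_{\bf}^r h + L_{\bg}L_{\bf}^{r-1}h\cdot\bk + \bm{\alpha}\T\bH$, with no spurious input-dependent contributions from intermediate levels. Once that algebraic identification is secured, step two is a routine cascade of one-dimensional comparison arguments; interestingly, the negativity of the roots $\nu_i$ is not actually required for forward invariance itself but rather furnishes the natural stability/robustness interpretation motivating the term \emph{exponential} CBF.
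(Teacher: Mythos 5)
Your argument is correct, and it is essentially the standard proof of this result: the paper itself does not prove Theorem~\ref{thm:ECBF} but imports it from \cite{SreenathACC16}, and your two steps (identifying $\psi_r$ with the left-hand side of \eqref{eq:ECBF-control} plus $\bm{\alpha}\T\bH(\bx)$ via the relative-degree cancellation and the factorization \eqref{eq:ECBF-poly}, then cascading $\dot{\psi}_{i-1}\geq\nu_i\psi_{i-1}$ down the recursion with the comparison lemma) reproduce exactly the argument in that reference. Your side remark is also accurate: negativity of the $\nu_i$ plays no role in the invariance argument itself, since $\psi_{i-1}(\bx_0)e^{\nu_i t}\geq 0$ regardless of the sign of $\nu_i$; it only matters for the exponential-convergence interpretation. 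The only tacit assumption worth acknowledging is forward completeness of the closed-loop system, so that the conclusion on the maximal interval of existence extends to all $t\geq 0$ as claimed.
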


The main utility of ECBFs is their ability to instantiate safety filters -- controllers that filter out unsafe actions from a desired controller $\bk_{\rm{d}}\,:\,\R^n\rightarrow\R^m$ to ensure the satisfaction of safety constraints. In particular, given an ECBF, one may leverage the optimization-based controller:
\begin{equation}
\begin{aligned}
    \bk(\bx) = \argmin_{\bu\in\R^m}\quad &\tfrac{1}{2}\|\bu - \bk_{\rm{d}}(\bx)\|_{\bG(\bx)}^2 \\ 
    \mathrm{s.t.} \quad & L_{\bf}^rh(\bx) + L_{\bg}L_{\bf}^{r-1}h(\bx)\bu \geq - \bm{\alpha}\T\bH(\bx)
\end{aligned}
\end{equation}
where $\bG\,:\,\R^n\rightarrow\R^{m\times m}$ is positive definite, to enforce forward invariance of $\mc{S}$ and, therefore, satisfaction of the constraint $h(\bx(t))\geq0$ for all $t\geq0$. 

\subsection{Outputs and Relative Degree}
In this paper, we focus on state constraints defined on a collection of \emph{outputs} $\by\,:\,\R^n\rightarrow\R^m$ of \eqref{eq:control-affine-system}.
Our ability to construct a controller enforcing such constraints relies on the notion of \emph{vector relative degree}, defined as follows.

\begin{definition}[\cite{Isidori}]
    A smooth function $\by\coloneqq(y_1(\bx),\dots,y_m(\bx))\in\R^m$ is said to have \emph{vector relative degree} $\br=(r_1,\dots,r_m)$ at $\bx\in\R^n$ if for all $i\in\{1,\dots,m\}$:
    \begin{equation*}
        L_{\bg}L_{\bf}^jy_i(\bx)=0,
    \end{equation*}
    for all $j\in\{0,\dots,r_i-2\}$ and the \emph{decoupling matrix}:
    \begin{equation}\label{eq:decoupling-matrix}
        \bB(\bx)\coloneqq 
        \begin{bmatrix}
            L_{\bg}L_{\bf}^{r_1-1}y_1(\bx) \\
             \vdots \\ 
             L_{\bg}L_{\bf}^{r_m-1}y_m(\bx) \\
        \end{bmatrix}
        =
        \begin{bmatrix}
            \bb_1(\bx)\T \\
             \vdots \\ 
             \bb_m(\bx)\T \\
        \end{bmatrix}
        \in\R^{m\times m},
    \end{equation}
    where $\bb_i(\bx)\coloneqq L_{\bg}L_{\bf}^{r_i-1}y_i(\bx)\T$, 
    is invertible.
\end{definition}

\subsection{Compatibility of Constraints}
The main topic addressed in this paper is the synthesis of controllers that enforce multiple CBF constraints simultaneously. To formalize such ideas, we first define what it means for multiple constraints to be compatible and then review necessary and sufficient conditions for compatibility.
\begin{definition}
    A collection of affine inequality constraints of the form $c_i(\bx) + \bd_i(\bx)\T\bu\geq 0$, where $c_i\,:\,\R^n\rightarrow\R$ and $\bd_i\,:\,\R^n\rightarrow\R^m$ for all $i\in\mc{I}\subset\mathbb{N}$, are said to be \emph{compatible} for \eqref{eq:control-affine-system} at $\bx\in\R^n$ if there exists an input $\bu\in\R^m$ such that:
    \begin{equation}
        c_i(\bx) + \bd_i(\bx)\T\bu \geq 0,\quad \forall i\in\mc{I}.
    \end{equation}
\end{definition}

\begin{lemma}\label{lemma:compatible}
     A collection of affine inequality constraints of the form $c_i(\bx) + \bd_i(\bx)\T\bu\geq 0$, where $c_i\,:\,\R^n\rightarrow\R$ and $\bd_i\,:\,\R^n\rightarrow\R^m$ for all $i\in\mc{I}\subset\mathbb{N}$, are compatible for \eqref{eq:control-affine-system} at $\bx\in\R^n$ if and only if for all $\lambda_i\geq0$:
     \begin{equation}\label{eq:compat-cond}
         \sum_{i\in\mc{I}}\lambda_i\bd_i(\bx)=\bzero \implies \sum_{i\in\mc{I}}\lambda_i c_i(\bx)\geq0.
     \end{equation}
\end{lemma}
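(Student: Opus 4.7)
The plan is to recognize this as a standard theorem of alternatives, essentially Farkas' lemma, and handle the two directions separately since they have very different flavors.

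For the forward direction ($\Rightarrow$), which is the easy half, I take any compatible $\bu$ and any nonnegative multipliers $\{\lambda_i\}$ with $\sum_{i\in\mc{I}}\lambda_i\bd_i(\bx)=\bzero$. Summing the satisfied inequalities $c_i(\bx) + \bd_i(\bx)\T\bu \geq 0$ weighted by $\lambda_i\geq 0$ gives $\sum_i \lambda_i c_i(\bx) + \bigl(\sum_i \lambda_i \bd_i(\bx)\bigr)\T \bu \geq 0$, and the second term vanishes by assumption, yielding $\sum_i \lambda_i c_i(\bx) \geq 0$. This direction is purely algebraic and requires no duality.

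For the reverse direction ($\Leftarrow$), I would stack the constraint data into matrix form. Let $\bD(\bx) \in \R^{|\mc{I}|\times m}$ have rows $\bd_i(\bx)\T$ and let $\bc(\bx)\in\R^{|\mc{I}|}$ have entries $c_i(\bx)$, so that compatibility at $\bx$ reduces to feasibility of the linear inequality system $\bD(\bx)\bu \geq -\bc(\bx)$. By Farkas' lemma in its Motzkin transposition form, this system is infeasible if and only if there exists $\blambda\geq\bzero$ satisfying $\bD(\bx)\T\blambda = \bzero$ and $\blambda\T\bc(\bx) < 0$. The hypothesis \eqref{eq:compat-cond} precisely forbids such a certificate of infeasibility, so the system must be feasible.

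The main technical work, which I would defer to a citation, lies in Farkas' lemma itself. If one wanted a self-contained argument, the strategy would be a separating hyperplane between the affine set $\{\bD(\bx)\bu + \bc(\bx) : \bu \in \R^m\}$ and the closed convex cone $\R^{|\mc{I}|}_{\geq 0}$; the linearity of the first set forces any separating functional to annihilate the image of $\bD(\bx)$, while its nonnegativity on the orthant yields a nonnegative vector $\blambda$ that would contradict \eqref{eq:compat-cond}. The main subtlety here is that neither set is compact, so one must invoke a proper separation theorem rather than strict separation, but the resulting conclusion is still sharp enough to close the argument.
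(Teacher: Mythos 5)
Your argument is correct and is essentially the route the paper takes: the paper's proof simply defers to the cited reference's Theorem 2, which is the same Farkas/Gale theorem-of-alternatives characterization you invoke (the easy direction by taking a nonnegative combination of the satisfied constraints, the hard direction by the transposition theorem applied to the stacked system $\bD(\bx)\bu \geq -\bc(\bx)$). The only soft spot is your optional self-contained separation sketch, where proper separation of the affine set from the orthant only yields $\blambda\T\bc(\bx) \leq 0$ rather than the strict inequality needed to contradict \eqref{eq:compat-cond} (the standard fix is to separate a point from the finitely generated, hence closed, cone), but since you explicitly defer that step to Farkas' lemma itself, this does not affect the correctness of your proof.
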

\begin{proof}
    The proof follows that of \cite[Thm. 2]{TamasLCSS23}.
\end{proof}

\section{Compatibility of Multiple ECBFs}\label{sec:main}
The main objective of this paper is to synthesize controllers that enforce multiple state constraints on a nonlinear system of the form \eqref{eq:control-affine-system}. In this section, we provide conditions under which i) CBF constraints associated with these state constraints are compatible, ii) the resulting controller is well-defined, and iii) the controller can be expressed in closed-form. We address these challenges by focusing on the special, yet relevant, case of box constraints on the output $\by$ of \eqref{eq:control-affine-system}:
\begin{equation}\label{eq:output-constraints}
    \underbrace{
    \begin{bmatrix}
        \underline{y}_1 \\ \vdots \\ \underline{y}_m
    \end{bmatrix}}_{\underline{\by}}
    \leq
    \underbrace{
    \begin{bmatrix}
        y_1(\bx) \\ \vdots \\ y_m(\bx) \\
    \end{bmatrix}}_{\by(\bx)}
    \leq
    \underbrace{
    \begin{bmatrix}
        \overline{y}_1 \\ \vdots \\ \overline{y}_m
    \end{bmatrix}}_{\underline{\by}},
\end{equation}
where $\overline{y}_i > \underline{y}_i$ for all $i\in\{1,\dots,m\}$.
These output constraints can be encoded through $m$ pairs of ECBFs:
\begin{equation}\label{eq:ECBF-pairs}
    \begin{aligned}
        \underline{h}_{i}(\bx) = & y_i(\bx) - \underline{y}_i, \\
        \overline{h}_i(\bx) = & \overline{y}_i - y_i(\bx), \\ 
    \end{aligned}
\end{equation}
yielding a total of $2m$ ECBFs. Associated with each of these ECBFs is a state constraint set:
\begin{equation*}
    \begin{aligned}
        \underline{\mc{C}}^i \coloneqq & \{\bx\in\R^n\,:\, \underline{h}_i(\bx) \geq 0 \} = \{\bx\in\R^n\,:\,y_i(\bx) \geq \underline{y}_i\}, \\
        \overline{\mc{C}}^i \coloneqq & \{\bx\in\R^n\,:\, \overline{h}_i(\bx) \geq 0 \} = \{\bx\in\R^n\,:\,y_i(\bx) \leq \overline{y}_i\}, \\
    \end{aligned}
\end{equation*}
yielding the overall state constraint set:
\begin{equation}\label{eq:state-constraint-set}
    \mc{C} \coloneqq \left(\bigcap_{i=1}^{m}\underline{\mc{C}}^i \right) \bigcap \left(\bigcap_{i=1}^{m}\overline{\mc{C}}^i \right). 
\end{equation}
If $\by$ has vector relative degree $\br=(r_1,\dots,r_m)$ then $\overline{h}_i$ and $\underline{h}_i$ have relative degree $r_i$ and may be leveraged to construct a corresponding safe set:
\begin{equation}\label{eq:S-multi}
    \begin{aligned}
    \mc{S} \coloneqq  \left(\bigcap_{i=1}^m\underline{\mc{S}}_i \right) \bigcap \left(\bigcap_{i=1}^m\overline{\mc{S}}_i  \right)\subset\mc{C}, \\
        \underline{\mc{S}}_i \coloneqq  \bigcap_{j=0}^{r_i}\underline{\mc{C}}_j^i \quad
        \overline{\mc{S}}_i \coloneqq  \bigcap_{j=0}^{r_i}\overline{\mc{C}}_j^i \quad,
    \end{aligned}
\end{equation}
where $\underline{\mc{C}}_j^i$ and $\overline{\mc{C}}_j^i$ for $j\in\{0,\dots,r_i\}$ are defined as in \eqref{eq:-psi-i} and \eqref{eq:C=i}. Note that, based on the definition of $\mc{S}$ in \eqref{eq:S-multi}, $\underline{h}_i(\bx)\geq0$ and $\overline{h}_i(\bx)\geq0$ for all $\bx\in\mc{S}$, so that forward invariance of $\mc{S}$ implies satisfaction of the original state constraints. These ECBFs may be integrated into a single optimization-based safety filter:
\begin{equation}\label{eq:ecbf-qp-multi-1}
\begin{aligned}
    \min_{\bu\in\R^m}\quad &\tfrac{1}{2}\|\bu - \bk_{\rm{d}}(\bx)\|_{\bG(\bx)}^2 \\ 
    \mathrm{s.t.}
    \quad & L_{\bf}^{r_i}\underline{h}_i(\bx) + L_{\bg}L_{\bf}^{r_i-1}\underline{h}_i(\bx)\bu \geq - {\bm{\alpha}}_i\T\underline{\bH}_i(\bx) \\ 
    \quad & L_{\bf}^{r_i}\overline{h}_i(\bx) + L_{\bg}L_{\bf}^{r_i-1}\overline{h}_i(\bx)\bu \geq - {\bm{\alpha}}_i\T\overline{\bH}_i(\bx) \\ 
\end{aligned}
\end{equation}
for all $i\in\{1,\dots,m\}$, where $\bm{\alpha}_i\coloneqq(\alpha^i_1,\dots,\alpha^i_r)\in\R^r$ and:
\begin{equation*}
    \begin{aligned}
        \underline{\bH}_i(\bx) \coloneqq &
        \begin{bmatrix}
            \underline{h}_i(\bx) & L_{\bf}\underline{h}_i(\bx) & \hdots &  L_{\bf}^{r_i-1}\underline{h}_i(\bx)
        \end{bmatrix}\T
        \\
        = & \begin{bmatrix}
            y_i(\bx) - \underline{y}_i & L_{\bf}y_i(\bx) & \hdots &  L_{\bf}^{r_i-1}y_i(\bx) 
        \end{bmatrix}\T \\
        \overline{\bH}_i(\bx) \coloneqq &
        \begin{bmatrix}
            \overline{h}_i(\bx) & L_{\bf}\overline{h}_i(\bx) & \hdots &  L_{\bf}^{r_i-1}\overline{h}_i(\bx)
        \end{bmatrix}\T \\ 
        = & \begin{bmatrix}
            \overline{y}_i - y_i(\bx) & -L_{\bf}y_i(\bx) & \hdots &  -L_{\bf}^{r_i-1}y_i(\bx) 
        \end{bmatrix}\T. \\
    \end{aligned}
\end{equation*}
Moreover, noting that:
\begin{equation}
    \begin{aligned}
        L_{\bf}^{r_i}\underline{h}_i(\bx) = & L_{\bf}^ry_i(\bx) \\
        L_{\bf}^{r_i}\overline{h}_i(\bx) = & -L_{\bf}^ry_i(\bx) \\
        L_{\bg}L_{\bf}^{r_i-1}\underline{h}_i(\bx) = & L_{\bg}L_{\bf}^{r_i-1}y_i(\bx) \\ 
        L_{\bg}L_{\bf}^{r_i-1}\overline{h}_i(\bx) = & -L_{\bg}L_{\bf}^{r_i-1}y_i(\bx), \\ 
    \end{aligned}
\end{equation}
allows \eqref{eq:ecbf-qp-multi-1} to be equivalently written as:
\begin{equation}\label{eq:ecbf-qp-multi-2}
\begin{aligned}
    \min_{\bu\in\R^m}\quad &\tfrac{1}{2}\|\bu - \bk_{\rm{d}}(\bx)\|_{\bG(\bx)}^2 \\ 
    \mathrm{s.t.}
    \quad &  a_i(\bx) + \bb_i(\bx)\T\bu + \alpha_1^i(y_i(\bx) - \underline{y}_i) \geq 0 \\ 
    \quad & -a_i(\bx) - \bb_i(\bx)\T\bu - \alpha_1^i(y_i(\bx) - \overline{y}_i )  \geq 0, \\ 
\end{aligned}
\end{equation}
for all $i\in\{1,\dots,m\}$ where $\bb_i$ is defined in \eqref{eq:decoupling-matrix} and:
\begin{equation}\label{eq:ai-bi}
    \begin{aligned}
        a_i(\bx) \coloneqq & L_{\bf}^{r_i}y_i(\bx) + \sum_{j=1}^{r_i-1}\alpha_{j+1}^iL_{\bf}^jy_i(\bx).
    \end{aligned}
\end{equation}
The following lemma provides conditions under which the constraints from \eqref{eq:ecbf-qp-multi-1} are compatible for all $\bx\in\mathcal{S}$. 
\begin{lemma}\label{lemma:ECBF-compat}
    Let $\by\,:\,\R^n\rightarrow\R^m$ have relative degree $\br=(r_1,\dots,r_m)$ on an open set $\mathcal{E}\supset\mathcal{S}$ for \eqref{eq:control-affine-system} and consider the output constraints \eqref{eq:output-constraints}, the associated ECBFs \eqref{eq:ECBF-pairs}, and the resulting safety filter \eqref{eq:ecbf-qp-multi-1} (equivalently, \eqref{eq:ecbf-qp-multi-2}). Provided that $\alpha_1^i>0$ and $\overline{y}_i>\underline{y}_i$ for all $i\in\{1,\dots,m\}$, the constraints in \eqref{eq:ecbf-qp-multi-1}
    are compatible for all $\bx\in\mathcal{E}\supset\mathcal{S}$. 
\end{lemma}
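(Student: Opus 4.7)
The plan is to verify the sufficient condition from Lemma \ref{lemma:compatible} applied directly to the $2m$ affine-in-$\bu$ constraints appearing in the equivalent form \eqref{eq:ecbf-qp-multi-2}. To that end, I would index the multipliers so that $\underline{\lambda}_i \ge 0$ corresponds to the lower constraint $a_i(\bx) + \bb_i(\bx)\T\bu + \alpha_1^i(y_i(\bx) - \underline{y}_i) \geq 0$ and $\overline{\lambda}_i \ge 0$ corresponds to the upper constraint $-a_i(\bx) - \bb_i(\bx)\T\bu - \alpha_1^i(y_i(\bx) - \overline{y}_i) \geq 0$. The coefficient vectors multiplying $\bu$ are then $\bb_i(\bx)$ and $-\bb_i(\bx)$, respectively.

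Next, I would analyze the null-space hypothesis of the Farkas-style condition \eqref{eq:compat-cond}. It reads
\begin{equation*}
    \sum_{i=1}^m \bigl(\underline{\lambda}_i - \overline{\lambda}_i\bigr)\bb_i(\bx) = \bzero.
\end{equation*}
This is where vector relative degree enters: by hypothesis $\by$ has vector relative degree $\br$ on $\mathcal{E}$, so the decoupling matrix $\bB(\bx)$ whose rows are $\bb_i(\bx)\T$ is invertible for every $\bx \in \mathcal{E}$. Invertibility of $\bB(\bx)$ is equivalent to linear independence of the vectors $\{\bb_i(\bx)\}_{i=1}^m$, so the displayed identity forces $\underline{\lambda}_i = \overline{\lambda}_i$ for each $i$.

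Setting $\mu_i \coloneqq \underline{\lambda}_i = \overline{\lambda}_i \ge 0$, the drift-side sum in \eqref{eq:compat-cond} collapses because the $a_i(\bx)$ terms cancel pairwise:
\begin{equation*}
    \sum_{i=1}^m \mu_i\bigl[\alpha_1^i(y_i(\bx) - \underline{y}_i) - \alpha_1^i(y_i(\bx) - \overline{y}_i)\bigr] = \sum_{i=1}^m \mu_i \alpha_1^i(\overline{y}_i - \underline{y}_i).
\end{equation*}
Since $\mu_i \ge 0$, $\alpha_1^i > 0$, and $\overline{y}_i > \underline{y}_i$ by assumption, this quantity is nonnegative, and the hypothesis of Lemma \ref{lemma:compatible} is verified at every $\bx\in\mathcal{E}$; compatibility follows.

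The argument has essentially no technical obstacle once the problem is cast in the $(a_i, \bb_i)$ form of \eqref{eq:ecbf-qp-multi-2}; the crucial observation—and the only place where the hypotheses of the lemma are used—is that vector relative degree makes the rows of the decoupling matrix linearly independent, which in turn forces $\underline{\lambda}_i = \overline{\lambda}_i$ and triggers the cancellation of the $a_i(\bx)$ terms. The remaining positivity of $\alpha_1^i(\overline{y}_i - \underline{y}_i)$ is precisely the slack provided by the box constraint having nonempty interior, which is exactly what one would intuitively expect to be needed for the lower and upper CBF constraints to be simultaneously feasible.
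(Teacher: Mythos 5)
Your proposal is correct and follows essentially the same route as the paper's proof: both invoke Lemma \ref{lemma:compatible} on the $2m$ constraints in the form \eqref{eq:ecbf-qp-multi-2}, use linear independence of the rows $\{\bb_i(\bx)\}_{i=1}^m$ of the decoupling matrix to force $\underline{\lambda}_i = \overline{\lambda}_i$, and observe that the $a_i(\bx)$ terms then cancel, leaving the nonnegative sum $\sum_{i=1}^m \underline{\lambda}_i\alpha_1^i(\overline{y}_i - \underline{y}_i)$. No gaps; this matches the paper's argument step for step.
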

\begin{proof}
    To show the constraints are compatible we leverage Lemma \ref{lemma:compatible}. To this end, define:
    \begin{equation}\label{eq:cj-dj}
        \begin{aligned}
            c_j(\bx) \coloneqq & a_j(\bx) + \alpha_1^j(y_j(\bx) - \underline{y}_j) \\ 
            c_{j+1}(\bx) \coloneqq & -a_j(\bx) -  \alpha_1^j(y_j(\bx) - \overline{y}_j )  \\ 
            \bd_{j}(\bx) = & \bb_{j}(\bx) \\ 
            \bd_{j+1}(\bx) = & -\bb_{j}(\bx), \\
        \end{aligned}
    \end{equation}
    for $j\in\{1,3,\dots,2m-1\}$. Now, we compute:
    \begin{equation*}
        \begin{aligned}
            \sum_{j=1}^{2m}\lambda_j\bd_j(\bx) = \sum_{i=1}^m(\underline{\lambda}_i - \overline{\lambda}_i)\bb_i(\bx) = \sum_{i=1}^p\mu_i\bb_i(\bx),
        \end{aligned}
    \end{equation*}
    where $\lambda_j=\underline{\lambda}_j$ and $\lambda_{j+1}=\overline{\lambda}_j$ for $j\in\{1,3,\dots,2m-1\}$, and $\mu_i\coloneqq \underline{\lambda}_i - \overline{\lambda}_i$.
    Under the relative degree assumption, the vectors $ \{\bb_i(\bx)\}_{i=1}^{m}$ are linearly independent. Hence: 
\begin{equation}
\begin{aligned}
    \sum_{i=1}^m\mu_i\bb_i(\bx) = \bzero \iff & \mu_i = 0,\;\forall i\in\{1,\dots,m\} \\
    \iff & \underline{\lambda}_i = \overline{\lambda}_i,\;\forall i\in\{1,\dots,m\}.
\end{aligned}
\end{equation}
Thus, when $\sum_{j=1}^{2m}\lambda_j\bd_{j}=\bzero$ we have:
\begin{equation*}
    \begin{aligned}
        \sum_{j=1}^{2m}\lambda_jc_j(\bx) = & \sum_{i=1}^m\underline{\lambda}_i\left[a_i(\bx) + \alpha_1^i(y_i(\bx) - \underline{y}_i) \right] \\ 
        & - \sum_{i=1}^m\overline{\lambda}_i\left[a_i(\bx) + \alpha_1^j(y_i(\bx) - \overline{y}_i )  \right] \\
        = & \sum_{i=1}^m\underline{\lambda}_i\alpha_1^i\left[\overline{y}_i - \underline{y}_i \right] \geq 0,
    \end{aligned}
\end{equation*}
for all $\underline{\lambda}_i=\overline{\lambda}_i\geq0$, which implies that \eqref{eq:compat-cond} holds. It thus follows from Lemma \ref{lemma:compatible} that the constraints in \eqref{eq:ecbf-qp-multi-1} are compatible for all $\bx\in\mathcal{E}\supset\mathcal{S}$, as desired.
\end{proof}

The preceding lemma states that the QP in \eqref{eq:ecbf-qp-multi-1} is always feasible: for any $\bx\in\mathcal{E}\supset\mathcal{S}$ there exists an input $\bu\in\R^m$ satisfying the collection of ECBF constraints simultaneously. Yet, this is not enough to ensure safety of the closed-loop system. One must additionally ensure that the resulting controller is at least locally Lipschitz continuous to guarantee the existence and uniqueness of solutions to the resulting closed-loop system. The following lemma illustrates that this controller is indeed locally Lipschitz continuous.

\begin{lemma}\label{lemma:lipschitz}
    Let the conditions of Lemma \ref{lemma:ECBF-compat} hold. Then, the optimal solution to \eqref{eq:ecbf-qp-multi-1}, $\bu^*\,:\,\mathcal{E}\rightarrow\R^m$, exists, is unique, and is locally Lipschitz continuous. 
\end{lemma}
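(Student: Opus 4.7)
My plan is to obtain existence and uniqueness from the standard strictly convex QP argument, and then deduce local Lipschitz continuity by verifying a linear independence constraint qualification (LICQ) everywhere on $\mathcal{E}$ and invoking a parametric-QP sensitivity result.

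For existence and uniqueness, I would observe that the objective $\tfrac12\|\bu-\bk_{\rm d}(\bx)\|_{\bG(\bx)}^2$ is strictly convex and coercive in $\bu$ because $\bG(\bx)$ is positive definite, while the feasible set is a closed convex polyhedron. By Lemma \ref{lemma:ECBF-compat}, this feasible set is nonempty for every $\bx\in\mathcal{E}$, so the strictly convex program admits a unique minimizer $\bu^*(\bx)$.

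For Lipschitz continuity, the plan is to apply a parametric QP sensitivity theorem (of the Hager/Robinson type routinely used in the CBF literature), which requires checking that LICQ holds at $\bu^*(\bx)$ for every $\bx\in\mathcal{E}$. The key observation is that the two constraints of the $i$th pair in \eqref{eq:ecbf-qp-multi-2} cannot be active simultaneously: adding the two equalities forces $\alpha_1^i(\overline{y}_i-\underline{y}_i)=0$, contradicting $\overline{y}_i>\underline{y}_i$ together with $\alpha_1^i>0$. Hence at any feasible $\bu$ the set of active constraint gradients is a subset of $\{\pm\bb_i(\bx)\}_{i=1}^m$ containing at most one vector per index $i$. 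Since $\by$ has vector relative degree $\br$ on $\mathcal{E}$, the rows $\bb_1(\bx)\T,\dots,\bb_m(\bx)\T$ of $\bB(\bx)$ are linearly independent on $\mathcal{E}$, and therefore any such subset (with sign choices) is also linearly independent; this gives LICQ at $\bu^*(\bx)$.

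With LICQ in hand, together with smoothness of $\bf,\bg,\by,\bk_{\rm d},\bG$ (so that the QP data $a_i,\bb_i,y_i$ and the objective parameters are locally Lipschitz in $\bx$) and strict convexity of the objective, a standard parametric-QP sensitivity result yields local Lipschitz continuity of $\bu^*$ on $\mathcal{E}$. The main technical hurdle I anticipate is the LICQ verification; once that step is isolated as above via the mutual exclusivity of each pair plus the relative-degree assumption, invoking the sensitivity theorem is routine.
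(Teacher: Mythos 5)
Your proposal is correct and follows essentially the same route as the paper: existence and uniqueness via strict convexity and coercivity of the $\bG(\bx)$-weighted objective over the nonempty closed feasible set, and local Lipschitz continuity via the observation that the two constraints of each pair cannot be simultaneously active (since $\alpha_1^i>0$ and $\overline{y}_i>\underline{y}_i$), so the active constraint gradients are a sign-adjusted subset of the linearly independent rows of $\bB(\bx)$, after which the Hager-type parametric QP sensitivity theorem applies. The only cosmetic difference is that the paper phrases the active-set bound as $|\mc{A}(\bx)|\leq m$ and the row independence of the stacked active-constraint matrix $\bD(\bx)$ rather than explicitly naming LICQ, but the content is identical.
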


\begin{proof}
    Since the objective function is strictly convex and radially unbounded, and the feasible set is non-empty (by Lemma \ref{lemma:ECBF-compat}) and closed, it follows from Weierstrass' Theorem that a unique optimal solution $\bu^*(\bx)$ exists for each $\bx\in\mathcal{E}$ \cite[Ch. 3.1.2]{Bertsekas}. We will say that constraint $j\in\{1,\dots,2m\}$ is \emph{active} at $\bx\in\mathcal{E}$ if the optimal solution satisfies:
    \begin{equation}
        c_{j}(\bx) + \bd_j(\bx)\bu^*(\bx) = 0,
    \end{equation}
    where $c_j$, $\bd_j$ are defined as in \eqref{eq:cj-dj}. For each $\bx\in\mathcal{E}$, let $\mc{A}(\bx)\subset\mathbb{N}$ denote the set of active inequality constraints, which we will show can have at most $m$ elements. For the sake of contradiction, suppose there exists a state $\bx\in\mathcal{E}$ such that $|\mc{A}(\bx)|=M>m$. This implies that there exists an $i\in\{1,\dots,m\}$ such that:
    \begin{equation*}
    \begin{aligned}
        a_i(\bx) + \bb_i(\bx)\T\bu^*(\bx) + \alpha_1^i(y_i(\bx) - \underline{y}_i) = & 0 \\ 
        - a_i(\bx) - \bb_i(\bx)\T\bu^*(\bx) - \alpha_1^i(y_i(\bx) - \overline{y}_i) = & 0.
    \end{aligned}
    \end{equation*}
    Yet, this implies that $\underline{y}_i= \overline{y}_i$, contradicting the assumption that $\underline{y}_i < \overline{y}_i$. Hence, we must have $|\mc{A}(\bx)|\leq m$. This also implies that only ``one side" of each output constraint can be active at a given $\bx\in\mathcal{E}$. For each $\bx\in\mathcal{E}$, consider the vectorized active constraints:
    \begin{equation*}
        \underbrace{
       \begin{bmatrix}
           c_{j_1}(\bx) \\ \vdots \\ c_{j_{|\mc{A}(\bx)|}}(\bx)
       \end{bmatrix}}_{\bc(\bx)} + 
       \underbrace{
       \begin{bmatrix}
           \bd_{j_1}(\bx)\T \\ \vdots \\ \bd_{j_{|\mc{A}(\bx)|}}(\bx)\T
       \end{bmatrix}}_{\bD(\bx)}
       = \bzero,
    \end{equation*}
    where $j_1,\dots,j_{|\mc{A}(\bx)|}\in\mc{A}(\bx)$. Since only one side of each output constraint can be active at a given $\bx$ and $\by$ has a relative degree, $\bD(\bx)$ has linearly independent rows for all $\bx\in\mathcal{E}$. Along with the assumption that all functions involved are locally Lipschitz continuous, this implies the conditions of \cite[Thm. 3.1]{HagerSIAM79} hold, and implies that the optimal solution $\bu^*\,:\,\mathcal{E}\rightarrow\R^m$ is locally Lipschitz continuous\footnote{See also \cite[Thm. 3]{DevLCSS23} and \cite[Thm. 1]{jankovic2018robust} for a similar argument.}.
\end{proof}

With the preceding results, we may now establish safety of the closed-loop system under the controller from \eqref{eq:ecbf-qp-multi-1}.

\begin{theorem}\label{thm:safety}
    Let the conditions of Lemma \ref{lemma:ECBF-compat} hold. Then, the controller $\bk(\bx)=\bu^*(\bx)$, with $\bu^*\,:\,\mathcal{E}\rightarrow\R^m$ the optimal solution to \eqref{eq:ecbf-qp-multi-1}, renders the set $\mc{S}$ from \eqref{eq:S-multi}
    forward invariant for the closed-loop system \eqref{eq:fcl}. Hence, for any initial condition $\bx_0\in\mc{S}$, the trajectory of \eqref{eq:fcl} satisfies:
    \begin{equation}\label{eq:output-constraint-2}
        \underline{\by}\leq\by(\bx(t))\leq \overline{\by},\quad \forall t\geq0.
    \end{equation}
\end{theorem}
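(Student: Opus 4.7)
The plan is to assemble the three ingredients already in place: compatibility (Lemma \ref{lemma:ECBF-compat}) guaranteeing that $\bu^*(\bx)$ is well-defined on $\mathcal{E}$, local Lipschitz continuity (Lemma \ref{lemma:lipschitz}) guaranteeing existence and uniqueness of solutions to the closed-loop system \eqref{eq:fcl}, and the single-ECBF invariance result (Theorem \ref{thm:ECBF}) applied to each of the $2m$ ECBFs. The goal is to conclude forward invariance of the finite intersection $\mc{S}$ and then read off the output constraints \eqref{eq:output-constraint-2} from the definition of the underlying sets $\underline{\mc{C}}^i$ and $\overline{\mc{C}}^i$.

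First, I would note that by Lemma \ref{lemma:lipschitz} the map $\bk = \bu^*$ is locally Lipschitz on the open set $\mathcal{E}\supset\mc{S}$, so the closed-loop vector field $\bf_{\rm cl}(\bx) = \bf(\bx) + \bg(\bx)\bk(\bx)$ is locally Lipschitz on $\mathcal{E}$ and standard ODE theory yields a unique maximal solution for every $\bx_0\in\mathcal{E}$. Next, because $\bu^*(\bx)$ is feasible for the QP \eqref{eq:ecbf-qp-multi-1} at every $\bx\in\mathcal{E}$, it satisfies the ECBF inequality \eqref{eq:ECBF-control} for each of the $2m$ functions $\underline{h}_i, \overline{h}_i$. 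Applying Theorem \ref{thm:ECBF} individually to each ECBF then yields that each of the sets $\underline{\mc{S}}_i$ and $\overline{\mc{S}}_i$ is forward invariant for \eqref{eq:fcl}.

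Since a finite intersection of forward invariant sets is forward invariant, I would conclude that $\mc{S} = \bigcap_{i=1}^m \underline{\mc{S}}_i \cap \bigcap_{i=1}^m \overline{\mc{S}}_i$ is forward invariant for \eqref{eq:fcl}. Consequently, for any $\bx_0\in\mc{S}$, the solution remains in $\mc{S}\subset\mathcal{E}$ for all $t\geq 0$ (so the controller is defined along the trajectory and no finite-escape issue arises). Finally, by the definition \eqref{eq:S-multi} together with \eqref{eq:ECBF-pairs}, being in $\mc{S}$ implies $\underline{h}_i(\bx(t)) = y_i(\bx(t)) - \underline{y}_i \geq 0$ and $\overline{h}_i(\bx(t)) = \overline{y}_i - y_i(\bx(t)) \geq 0$ for every $i\in\{1,\dots,m\}$ and every $t\geq 0$, which is exactly \eqref{eq:output-constraint-2}.

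The main obstacle I anticipate is purely bookkeeping: the relative degree condition and hence the validity of the ECBF inequalities only hold on $\mathcal{E}$, so one must argue that the trajectory cannot leave $\mathcal{E}$ before $\mc{S}$-invariance is invoked. This is resolved by observing that $\mc{S}\subset\mathcal{E}$ is closed in $\mathcal{E}$ and that forward invariance of $\mc{S}$ is established from the per-ECBF invariance above, which only uses values of $\bu^*$ on $\mathcal{E}$; thus, the standard argument that trajectories from $\mc{S}$ stay in $\mc{S}$ (and therefore in $\mathcal{E}$) closes the loop with no additional assumption on completeness of solutions required for the statement.
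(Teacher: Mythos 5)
Your proposal is correct and follows essentially the same route as the paper: feasibility of $\bu^*$ for \eqref{eq:ecbf-qp-multi-1} gives each ECBF inequality, Lemma \ref{lemma:lipschitz} supplies the local Lipschitz continuity needed to invoke Theorem \ref{thm:ECBF} on each $\underline{\mc{S}}_i$ and $\overline{\mc{S}}_i$, and forward invariance of the finite intersection yields \eqref{eq:output-constraint-2}. Your closing remark about trajectories remaining in $\mathcal{E}$ is a welcome extra care that the paper's proof leaves implicit, but it does not change the argument.
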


\begin{proof}
    The controller $\bk(\bx)=\bu^*(\bx)$ from \eqref{eq:ecbf-qp-multi-1} ensures that:
    \begin{equation}
    \begin{aligned}
    L_{\bf}^{r_i}\underline{h}_i(\bx) + L_{\bg}L_{\bf}^{r_i-1}\underline{h}_i(\bx)\bk(\bx) \geq - {\bm{\alpha}}_i\T\underline{\bH}_i(\bx), \\ 
     L_{\bf}^{r_i}\overline{h}_i(\bx) + L_{\bg}L_{\bf}^{r_i-1}\overline{h}_i(\bx)\bk(\bx) \geq - {\bm{\alpha}}_i\T\overline{\bH}_i(\bx), \\ 
    \end{aligned}
    \end{equation}
    for all $i\in\{1,\dots,m\}$ and $\bx\in\mathcal{E}$. Since this controller is locally Lipschitz continuous (Lemma \ref{lemma:lipschitz}), it follows from Theorem \ref{thm:ECBF} that each $\underline{\mc{S}}_i$ and $\overline{\mc{S}}_i$ is forward invariant. As the intersection of forward invariant sets is forward invariant \cite[Prop. 4.13]{Blanchini}, it follows that $\mc{S}$ from \eqref{eq:S-multi} is forward invariant. Since $\bx\in\mc{S}$ also implies that $\underline{h}_i(\bx)\geq0$ and $\overline{h}_i(\bx)\geq0$, forward invariance of $\mc{S}$ ensures that \eqref{eq:output-constraint-2} holds. 
\end{proof}

The previous theorem formally establishes safety of the closed-loop system under the optimization-based controller \eqref{eq:ecbf-qp-multi-1}; however, in certain situations, it may be desirable to have access to an explicit, closed-form, expression of this controller. The following theorem illustrates that for particular choices of $\bG$, \eqref{eq:ecbf-qp-multi-1} admits a closed-form solution.

\begin{theorem}\label{thm:closed-form}
    Let the conditions of Lemma \ref{lemma:ECBF-compat} hold and suppose that $\bG(\bx)=\bB(\bx)\T\bB(\bx)$, where $\bB\,:\,\R^n\rightarrow\R^{m\times m}$ is the decoupling matrix from \eqref{eq:decoupling-matrix}. Then, the primal-dual solution $(\bu^*(\bx),\bm{\lambda}^*(\bx))$ to \eqref{eq:ecbf-qp-multi-1} is given by:
    \begin{equation}\label{eq:qp-multi-soln}
    \begin{aligned}
        \bu^*(\bx) = & \bk_{\rm{d}}(\bx) + \sum_{i=1}^m\big(\underline{\lambda}^*_i(\bx) - \overline{\lambda}^*_i(\bx)\big)\bG^{-1}(\bx)\bb_{i}(\bx), \\ 
        \underline{\lambda}^*_i(\bx) = & \max\{0, -\underline{\omega}_i(\bx)\}, \\
        \overline{\lambda}^*_i(\bx) = & \max\{0, -\overline{\omega}_i(\bx)\}, \\
        \underline{\omega}_i(\bx) \coloneqq & a_{i}(\bx) + \alpha_1^i(y_i(\bx) - \underline{y}_i) + \bb_{i}(\bx)\T\bk_{\rm{d}}(\bx), \\
        \overline{\omega}_i(\bx) \coloneqq & -a_{i}(\bx) - \alpha_1^i(y_i(\bx) - \overline{y}_i) - \bb_{i}(\bx)\T\bk_{\rm{d}}(\bx),
    \end{aligned}
    \end{equation}
    where $a_i$ is defined as in \eqref{eq:ai-bi} and $\bm{\lambda}^*=(\lambda_1^*,\lambda_2^*,\dots,\lambda_{2m}^*)$ with $\lambda_{j}^*=\underline{\lambda}_j^*$ and $\lambda_{j+1}^*=\overline{\lambda}_{j}^*$ for $j\in\{1,3,\dots,2m-1\}$. 
\end{theorem}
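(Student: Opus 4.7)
The plan is to write down the KKT conditions for the QP \eqref{eq:ecbf-qp-multi-1} and exploit the particular algebraic structure of $\bG = \bB\T\bB$ to decouple what are a priori $2m$ coupled inequality constraints into $m$ independent pairs, each of which can be solved in closed form.

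First I would form the Lagrangian with nonnegative multipliers $\underline{\lambda}_i,\,\overline{\lambda}_i$ associated with the lower and upper ECBF constraints respectively. Since the objective is strictly convex and Lemma \ref{lemma:ECBF-compat} guarantees feasibility, strong duality holds, so the primal-dual optimum is characterized by the KKT system. Stationarity with respect to $\bu$ gives $\bG(\bx)(\bu-\bk_{\rm d}(\bx)) = \sum_{i=1}^m(\underline{\lambda}_i-\overline{\lambda}_i)\bb_i(\bx)$, which, upon inverting $\bG$, yields the claimed formula for $\bu^*(\bx)$.

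The crux of the argument is the following decoupling identity, which I would establish next. Because $\bB(\bx)$ has $\bb_i(\bx)\T$ as its $i$-th row, $\bB(\bx)\T$ has $\bb_i(\bx)$ as its $i$-th column, and so $\bB(\bx)^{-\top}\bb_i(\bx) = \be_i$. Consequently, under the choice $\bG = \bB\T\bB$,
\begin{equation*}
    \bb_i(\bx)\T \bG(\bx)^{-1} \bb_j(\bx) \;=\; \bb_i(\bx)\T \bB(\bx)^{-1}\bB(\bx)^{-\top}\bb_j(\bx) \;=\; [\bB(\bx)\bB(\bx)^{-1}]_{ij} \;=\; \delta_{ij}.
\end{equation*}
Substituting the expression for $\bu^*$ into the $i$-th pair of primal constraints and applying this identity collapses all cross-terms, reducing them to $\underline{\omega}_i(\bx) + \underline{\lambda}_i - \overline{\lambda}_i \geq 0$ and $\overline{\omega}_i(\bx) - \underline{\lambda}_i + \overline{\lambda}_i \geq 0$, with the $\underline{\omega}_i,\overline{\omega}_i$ as defined in the statement. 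Crucially, these $m$ pairs no longer share variables.

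For each $i$, I would then solve the decoupled subproblem via complementary slackness. Observe that $\underline{\omega}_i(\bx) + \overline{\omega}_i(\bx) = \alpha_1^i(\overline{y}_i - \underline{y}_i) > 0$, which by the argument already used in Lemma \ref{lemma:lipschitz} precludes both sides from being simultaneously active. A three-case analysis on the signs of $\underline{\omega}_i$ and $\overline{\omega}_i$ completes the derivation: if both are nonnegative, neither constraint binds and $\underline{\lambda}_i^*=\overline{\lambda}_i^*=0$; if $\underline{\omega}_i<0$, the lower constraint binds, forcing $\underline{\lambda}_i^* = -\underline{\omega}_i$ and $\overline{\lambda}_i^* = 0$ (the upper constraint is then satisfied with slack $\underline{\omega}_i+\overline{\omega}_i>0$); and symmetrically if $\overline{\omega}_i<0$. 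All three cases collapse into the compact form $\underline{\lambda}_i^* = \max\{0,-\underline{\omega}_i\}$, $\overline{\lambda}_i^* = \max\{0,-\overline{\omega}_i\}$. Finally I would verify that this candidate primal-dual pair satisfies every KKT condition, which, by convexity, certifies global optimality and uniqueness of the primal (dual uniqueness follows from linear independence of the active constraint gradients as in Lemma \ref{lemma:lipschitz}). The main obstacle is the decoupling identity; once in hand, the remaining KKT bookkeeping is routine.
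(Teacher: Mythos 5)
Your proposal is correct and follows essentially the same route as the paper's proof: invoke feasibility (Lemma \ref{lemma:ECBF-compat}) to get Slater's condition, characterize the optimum via the KKT system, use the Gram-matrix identity $\bb_i(\bx)\T\bG^{-1}(\bx)\bb_j(\bx)=\delta_{ij}$ to decouple the $2m$ constraints into $m$ independent pairs, and rule out both sides of a pair binding via $\underline{\omega}_i(\bx)+\overline{\omega}_i(\bx)=\alpha_1^i(\overline{y}_i-\underline{y}_i)>0$. The only cosmetic difference is that you derive the multipliers constructively from complementary slackness while the paper verifies the stated candidate case by case; the substance is identical.
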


\begin{proof}
    Since \eqref{eq:ecbf-qp-multi-2} is a convex optimization problem with affine inequality constraints, and these affine inequality constraints are feasible by Lemma \ref{lemma:ECBF-compat}, Slater's condition holds \cite[Ch. 5.2.3]{Boyd}, which implies that the Karush-Kuhn-Tucker (KKT) conditions are necessary and sufficient for optimality \cite[Ch. 5.5.3]{Boyd}.
    The Lagrangian associated with \eqref{eq:ecbf-qp-multi-2} is:
    \begin{equation}\label{eq:lagrangian}
        L(\bx,\bu,\bm{\lambda}) = \tfrac{1}{2}\|\bu - \bk_{\rm{d}}(\bx)\|_{\bG(\bx)}^2 - \sum_{j=1}^{2m}\lambda_{j}\big[c_j(\bx) + \bd_{j}(\bx)\T\bu\big],
    \end{equation}
    where $c_j$ and $\bd_j$ are as in \eqref{eq:cj-dj} and the KKT conditions are:
    \begin{align}
        \pdv{L}{\bu}(\bx,\bu^*(\bx),\bm{\lambda}^*(\bx)) = 0,\quad & \label{eq:stationarity} \\
        c_{j}(\bx) + \bd_{j}(\bx)\T\bu^*(\bx) \geq 0,\quad & \forall j\in\mc{J} \label{eq:primal-feasibility} \\
        \lambda_j^*(\bx) \geq 0,\quad & \forall j\in\mc{J} \label{eq:dual-feasibility} \\
        \lambda_{j}^*(\bx)\left[ c_{j}(\bx) + \bd_{j}(\bx)\T\bu^*(\bx) \right] = 0,\quad & \forall j\in\mc{J} \label{eq:complimentary-slackness}
    \end{align}
    where $\mc{J}=\{1,\dots,2m\}$,
    corresponding to stationarity \eqref{eq:stationarity}, primal feasibility \eqref{eq:primal-feasibility}, dual feasibility \eqref{eq:dual-feasibility}, and complementary slackness \eqref{eq:complimentary-slackness}. From \eqref{eq:lagrangian} and \eqref{eq:stationarity}, we have:
    \begin{equation*}
        \begin{aligned}
            \pdv{L}{\bu}(\bx,\bu,\bm{\lambda}) = & (\bu - \bk_{\rm{d}}(\bx))\bG(\bx) - \sum_{j=1}^{2m}\lambda_{j}\bd_{j}(\bx)\T \\
            = & (\bu - \bk_{\rm{d}}(\bx))\bG(\bx) - \sum_{i=1}^m\big(\underline{\lambda}_{i} - \overline{\lambda}_i \big)\bb_{i}(\bx)\T,
        \end{aligned}
    \end{equation*}
    from which one may verify that, with $\bu=\bu^*(\bx)$ and $\bm{\lambda}=\bm{\lambda}^*(\bx)$ as in \eqref{eq:qp-multi-soln}, the stationarity condition \eqref{eq:stationarity} is satisfied\footnote{Note that $\bG$ is invertible since $\bB$ has full rank based on the relative degree assumption.}. 
    Moreover, the candidate dual solution $\bm{\lambda}^*$ from \eqref{eq:qp-multi-soln} satisfies the dual feasibility condition \eqref{eq:dual-feasibility} by construction. Using \eqref{eq:cj-dj} and \eqref{eq:qp-multi-soln}, primal feasibility \eqref{eq:primal-feasibility} then dictates that:
    \begin{equation*}
        \begin{aligned}
            \underline{\omega}_i(\bx) + \sum_{k=1}^m\big(\underline{\lambda}^*_k(\bx) - \overline{\lambda}^*_k(\bx)\big)\bb_{i}(\bx)\T\bG^{-1}(\bx)\bb_{k}(\bx) \geq 0,  \\
            \overline{\omega}_i(\bx) - \sum_{k=1}^m\big(\underline{\lambda}^*_k(\bx) - \overline{\lambda}^*_k(\bx)\big)\bb_{i}(\bx)\T\bG^{-1}(\bx)\bb_{k}(\bx) \geq 0,
        \end{aligned}
    \end{equation*}
    for all $i\in\{1,\dots,m\}$.  Noting that $\bG(\bx)=\bB(\bx)\T\bB(\bx)$ is the Gram matrix associated with $\{\bb_i(\bx)\}_{i=1}^m$, we have\footnote{This follows from the element-wise definition of matrix multiplication.}:
    \begin{equation*}
        \bb_i(\bx)\T\bG^{-1}(\bx)\bb_{j}(\bx) = \begin{cases}
            1, & \text{if } i = j, \\ 
            0, & \text{if } i\neq j,
        \end{cases}
    \end{equation*}
    so that the primal feasibility condition reduces to:
    \begin{equation}\label{eq:primal-feasibility-simple}
        \begin{aligned}
           \underline{\omega}_i(\bx) + \big(\underline{\lambda}^*_i(\bx) - \overline{\lambda}^*_i(\bx)\big) \geq 0, \\ 
            \overline{\omega}_i(\bx)  -  \big(\underline{\lambda}^*_i(\bx) - \overline{\lambda}^*_i(\bx)\big) \geq 0, 
        \end{aligned}
    \end{equation}
    for all $i\in\{1,\dots,m\}$. To show that \eqref{eq:qp-multi-soln} ensures primal feasibility, we note that the left-hand-side of the first inequality in \eqref{eq:primal-feasibility} can be expressed as:
    \begin{equation*}
        \mathrm{LHS}(\bx) \coloneqq\underline{\omega}_i(\bx) + \max\{0, -\underline{\omega}_i(\bx)\} - \max\{0,-\overline{\omega}_i(\bx) \},
    \end{equation*}
    whose behavior we will analyze by considering four separate cases: i) $\underline{\omega}_i(\bx) \geq0$ and $\overline{\omega}_i(\bx) \geq0$; ii) $\underline{\omega}_i(\bx) \leq0$ and $\overline{\omega}_i(\bx) \geq0$; iii) $\underline{\omega}_i(\bx) \geq0$ and $\overline{\omega}_i(\bx) \leq0$; iv) $\underline{\omega}_i(\bx) \leq0$ and $\overline{\omega}_i(\bx) \leq0$. For cases (i)-(iii) we have $\mathrm{LHS}(\bx)\geq0$ since:
    \begin{equation*}
    \begin{aligned}
        \text{Case i):} \quad & \mathrm{LHS}(\bx) = \underline{\omega}_i(\bx) \geq 0, \\ 
        \text{Case ii):} \quad & \mathrm{LHS}(\bx) = \underline{\omega}_i(\bx) - \underline{\omega}_i(\bx) = 0, \\ 
        \text{Case iii):} \quad & \mathrm{LHS}(\bx) = \underline{\omega}_i(\bx) + \overline{\omega}_i(\bx) = \alpha_1^i\big(\overline{y}_i - \underline{y}_i\big) > 0. \\ 
    \end{aligned}
    \end{equation*}
    Next, we claim that there exists no $\bx\in\mathcal{E}$ for which case (iv) holds. Indeed, if both $\underline{\omega}_i(\bx) \leq0$ and $\overline{\omega}_i(\bx) \leq0$ then:
    \begin{equation*}
        \begin{aligned}
            a_i(\bx) + \bb_i(\bx)\T\bk_{\rm{d}}(\bx) + \alpha_1^i y_i(\bx) \leq \alpha_1^i \underline{y}_i, \\ 
            a_i(\bx) + \bb_i(\bx)\T\bk_{\rm{d}}(\bx) + \alpha_1^i y_i(\bx) \geq \alpha_1^i \overline{y}_i,
        \end{aligned}
    \end{equation*}
    which implies that $\overline{y}_i\leq \underline{y}_i$, contradicting the assumption that $\overline{y}_i > \underline{y}_i$. Thus, case (iv) never occurs. Following analogous steps for the second inequality in \eqref{eq:primal-feasibility-simple}, the preceding argument demonstrates that \eqref{eq:qp-multi-soln} satisfies the primal feasibility requirement \eqref{eq:primal-feasibility}. It thus remains to show that \eqref{eq:qp-multi-soln} satisfies the complimentary slackness condition \eqref{eq:complimentary-slackness}. Using \eqref{eq:ai-bi}, \eqref{eq:qp-multi-soln}, and \eqref{eq:primal-feasibility-simple}, this condition may be expressed as:
    \begin{equation*}
    \begin{aligned}
         \max\{0, -\underline{\omega}_i(\bx)\}\big[ \underline{\omega}_i(\bx) + \max\{0, -\underline{\omega}_i(\bx)\} - \overline{\lambda}^*_i(\bx)\big] = 0 ,\\
         \max\{0, -\overline{\omega}_i(\bx)\}\big[\overline{\omega}_i(\bx) - \underline{\lambda}_i^*(\bx) + \max\{0, -\overline{\omega}_i(\bx)\} \big] = 0,
    \end{aligned}
    \end{equation*}
    for all $i\in\{1,\dots,m\}$. One may verify that the above holds for cases (i)-(iii) outlined above and, by noting that case (iv) never occurs, we have that \eqref{eq:qp-multi-soln} satisfies the complementary slackness requirement \eqref{eq:complimentary-slackness}. Thus, \eqref{eq:qp-multi-soln} satisfies the KKT conditions and is therefore the optimal solution to \eqref{eq:ecbf-qp-multi-1}. 
\end{proof}

Theorem \ref{thm:closed-form} demonstrates that by selecting the cost function appropriately, it is possible to derive a closed-form solution to \eqref{eq:ecbf-qp-multi-1}. This was accomplished by selecting the weight $\bG$ in the quadratic cost as the Gram matrix associated with the decoupling matrix $\bB$, effectively decoupling the constraints in \eqref{eq:ecbf-qp-multi-1}, allowing the associated Lagrange multipliers to be considered independently. The motivation for deriving this closed-form solution stems from the use of CBFs in applications where running an optimizer in the control loop may be impractical or unacceptable. Examples include resource-constrained platforms where the computational overhead of instantiating an optimization solver may be infeasible, and in aerospace systems where various requirements often prohibit the use of optimizers in the control loop. 

\section{Stability Analysis}\label{sec:stability}
In this section, we study the stability properties of \eqref{eq:control-affine-system} equipped with the safety filters developed in Sec. \ref{sec:main} in the context of trajectory tracking. To this end, let $\by_{\rm{d}}\,:\,\R_{\geq0}\rightarrow\R^n$ be a smooth desired output trajectory and define:
\begin{equation}\label{eq:tracking-error}
    \be(\bx, t) \coloneqq \by(\bx) - \by_{\rm{d}}(t),
\end{equation}
as the tracking error. To make further statements regarding stability, we must, at a bare minimum, ensure that \eqref{eq:tracking-error} is bounded when the system is controlled by a safety filter.
\begin{lemma}
    Let the conditions of Theorem \ref{thm:safety} hold and suppose that $\by_{\rm{d}}(t)$ is bounded for all $t\geq0$. Then, along trajectories of the closed-loop system \eqref{eq:control-affine-system} with $\bu=\bk(\bx)$ in \eqref{eq:ecbf-qp-multi-1}, the tracking error $\be(\bx(t),t)$ is bounded for all $t\geq 0$.
\end{lemma}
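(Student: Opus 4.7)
The plan is to observe that this lemma follows almost immediately from Theorem \ref{thm:safety} combined with the triangle inequality. Since the conditions of Theorem \ref{thm:safety} hold, the closed-loop trajectory starting in $\mc{S}$ remains in $\mc{S}\subset\mc{C}$ for all $t\geq 0$, so the output constraint \eqref{eq:output-constraint-2} gives $\underline{\by}\leq\by(\bx(t))\leq\overline{\by}$ componentwise, which in particular bounds $\|\by(\bx(t))\|$ uniformly in $t$.

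With this in hand, I would apply the triangle inequality directly to the definition of the tracking error in \eqref{eq:tracking-error}:
\begin{equation*}
    \|\be(\bx(t),t)\| = \|\by(\bx(t)) - \by_{\rm{d}}(t)\| \leq \|\by(\bx(t))\| + \|\by_{\rm{d}}(t)\|.
\end{equation*}
The first term is bounded by a constant depending only on $\underline{\by},\overline{\by}$ (for instance, $\max\{\|\underline{\by}\|,\|\overline{\by}\|\}$ after noting that each component of $\by(\bx(t))$ lies in a compact interval), and the second term is bounded by hypothesis. Combining yields a uniform bound on $\|\be(\bx(t),t)\|$.

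The only subtlety worth remarking on is that invoking Theorem \ref{thm:safety} requires the closed-loop solution to exist for all $t\geq 0$. Forward completeness follows because the controller $\bk$ from \eqref{eq:ecbf-qp-multi-1} is locally Lipschitz by Lemma \ref{lemma:lipschitz}, and any trajectory starting in $\mc{S}$ stays in $\mc{S}$, so outputs remain in the compact set $[\underline{\by},\overline{\by}]$; together with smoothness of $\bf,\bg$ on the (open) set $\mathcal{E}$ containing $\mc{S}$, the solution cannot escape in finite time. I do not anticipate a significant obstacle here; the proof is essentially a one-line consequence of forward invariance plus boundedness of the reference.
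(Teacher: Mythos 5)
Your proposal is correct and follows essentially the same route as the paper: invoke Theorem \ref{thm:safety} to get $\underline{\by}\leq\by(\bx(t))\leq\overline{\by}$, hence boundedness of $\by(\bx(t))$, and combine with boundedness of $\by_{\rm{d}}(t)$ via the triangle inequality. Your added remark on forward completeness is a reasonable extra precaution that the paper's one-line proof leaves implicit.
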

\begin{proof}
    Theorem \ref{thm:safety} ensures that $\underline{\by} \leq \by(\bx(t)) \leq \overline{\by}$ for all time, implying that $\by(\bx(t))$ is bounded for all time. Along with the assumption that $\by_{\rm{d}}(t)$ is bounded for all time, this implies that $\be(\bx(t),t)$ is bounded for all time, as desired.
\end{proof}
To further characterize tracking capabilities when using a safety filter, suppose there exists a locally Lipschitz controller $\bk_{\rm{d}}\,:\,\R^n\times\R_{\geq0}\rightarrow\R^m$ and a continuously differentiable function $V\,:\,\R^n\times\R_{\geq0}\rightarrow\R_{\geq0}$ such that:
\begin{equation}\label{eq:tracking-lyap}
    \rho\|\be(\bx,t)\|^2 \leq V(\bx,t),
\end{equation}
\begin{equation}\label{eq:tracking-kd}
    \begin{aligned}
        \pdv{V}{t}(\bx,t) + \pdv{V}{\bx}(\bx,t)\big[\bf(\bx) + \bg(\bx)\bk_{\rm{d}}(\bx,t) \big] \\ \leq - \gamma V(\bx,t) - \sigma\left\Vert\pdv{V}{\bx}(\bx,t)\bg(\bx)\right\Vert^2,
    \end{aligned}
\end{equation}
for all $(\bx,t)\in\R^n\times\R_{\geq0}$, where $\rho,\gamma,\sigma>0$. The existence of this controller and corresponding Lyapunov function $V$ encodes the ability of \eqref{eq:control-affine-system} to robustly track the desired output trajectory, in an input-state-stability (ISS) sense.
The following proposition illustrates that ISS properties of this tracking controller are preserved when used in the multi-CBF safety filter \eqref{eq:qp-multi-soln}. 
\begin{proposition}\label{prop:tracking}
    Let the conditions of Theorem \ref{thm:closed-form} hold and define $\bk(\bx, t)=\bu^*(\bx, t)$, with $\bu^*$ as in\footnote{While \eqref{eq:qp-multi-soln} was defined for time-invariant desired controllers, the expression can be modified appropriately to handle time-varying $\bk_{\rm{d}}$.} \eqref{eq:qp-multi-soln}, where the desired controller is chosen as $\bk_{\rm{d}}\,:\,\R^n\times\R_{\geq0}\rightarrow\R^m$ in \eqref{eq:tracking-kd} for a Lyapunov function $V\,:\,\R^n\times\R_{\geq0}\rightarrow\R_{\geq0}$ satisfying \eqref{eq:tracking-lyap} and \eqref{eq:tracking-kd}. If the safe set $\mc{S}$ is compact, $\bx_0\in\mc{S}$, and $\by_{\rm{d}}(t)$ and all of its derivatives are bounded, then, along the closed-loop system trajectories, the tracking error satisfies: 
    \begin{equation}\label{eq:tracking-bound}
        \|\be(\bx(t),t)\| \leq \beta(V(\bx(0),0), t) + \iota(\|\bk_{\rm{cbf}}(\bx(t), t)\|_{\infty}),
    \end{equation}
    for all $t\geq0$, where $\beta(r,s)\coloneqq\sqrt{\tfrac{r}{\rho}}e^{-\frac{\gamma}{2}s}$, $\iota(\mu)\coloneqq \tfrac{\mu}{2\sqrt{\gamma\rho\sigma}}$ and:
    \begin{equation}
        \bk_{\rm{cbf}}(\bx,t) \coloneqq \sum_{i=1}^m(\underline{\lambda}_i(\bx,t) - \overline{\lambda}_i(\bx,t))\bG^{-1}(\bx)\bb_i(\bx),
    \end{equation}
    with $\underline{\lambda_i},\overline{\lambda}_i$ defined analogously to \eqref{eq:qp-multi-soln}.
\end{proposition}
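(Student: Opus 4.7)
The plan is to exploit the closed-form decomposition from Theorem \ref{thm:closed-form}, which lets us write the QP-based controller additively as $\bu^*(\bx,t) = \bk_{\rm d}(\bx,t) + \bk_{\rm cbf}(\bx,t)$. Viewed this way, the closed-loop vector field becomes $\bf(\bx) + \bg(\bx)\bk_{\rm d}(\bx,t) + \bg(\bx)\bk_{\rm cbf}(\bx,t)$, i.e., the nominal tracking loop perturbed by the CBF-induced term $\bg(\bx)\bk_{\rm cbf}(\bx,t)$. This reduces the problem to a standard ISS-type argument where the ``disturbance'' is $\bk_{\rm cbf}$, and the hypothesis \eqref{eq:tracking-kd} is precisely of the right form to absorb it.

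First I would establish that $\bx(t)$ and $\bk_{\rm cbf}(\bx(t),t)$ remain well-defined and bounded for all $t \geq 0$. Since Theorem \ref{thm:safety} renders the compact set $\mc{S}$ forward invariant and the controller is locally Lipschitz by Lemma \ref{lemma:lipschitz}, solutions exist globally and stay in $\mc{S}$. Combined with boundedness of $\by_{\rm d}$ and its derivatives (which enter through $\bk_{\rm d}$), the functions $\underline{\omega}_i,\overline{\omega}_i$ in \eqref{eq:qp-multi-soln} remain bounded along trajectories, so $\|\bk_{\rm cbf}(\bx(t),t)\|_{\infty}$ is finite. Next, differentiating $V$ along the closed loop and applying \eqref{eq:tracking-kd} yields
\begin{equation*}
    \dot V \leq -\gamma V - \sigma\left\Vert\pdv{V}{\bx}\bg\right\Vert^2 + \pdv{V}{\bx}\bg\,\bk_{\rm cbf}.
\end{equation*}
Young's inequality applied to the cross term with tuning parameter $\sigma$ gives $\pdv{V}{\bx}\bg\,\bk_{\rm cbf} \leq \sigma\|\pdv{V}{\bx}\bg\|^2 + \tfrac{1}{4\sigma}\|\bk_{\rm cbf}\|^2$, so the indefinite quadratic term cancels exactly and we obtain the clean dissipation inequality $\dot V \leq -\gamma V + \tfrac{1}{4\sigma}\|\bk_{\rm cbf}\|^2$.

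Finally, applying the comparison lemma gives
\begin{equation*}
    V(\bx(t),t) \leq V(\bx(0),0)\,e^{-\gamma t} + \tfrac{1}{4\sigma\gamma}\|\bk_{\rm cbf}(\bx(t),t)\|_{\infty}^2,
\end{equation*}
and using the lower bound \eqref{eq:tracking-lyap} together with $\sqrt{a+b}\leq\sqrt{a}+\sqrt{b}$ recovers the claimed bound \eqref{eq:tracking-bound} with exactly the $\beta$ and $\iota$ in the statement. The only nontrivial step is the Young's inequality choice: the coefficient $\sigma$ in \eqref{eq:tracking-kd} has been pre-selected so that the cross term is dominated by the quadratic matched-input dissipation, which is why the ISS gain $\iota(\mu) = \mu/(2\sqrt{\gamma\rho\sigma})$ comes out in closed form. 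The main conceptual obstacle is really the first step, namely justifying global existence of solutions and finiteness of $\|\bk_{\rm cbf}\|_\infty$; once compactness of $\mc{S}$ is invoked, the rest of the argument is a textbook ISS computation.
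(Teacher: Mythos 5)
Your proposal is correct and follows essentially the same route as the paper: the additive decomposition $\bu^* = \bk_{\rm d} + \bk_{\rm cbf}$, bounding the cross term so that the $-\sigma\|\pdv{V}{\bx}\bg\|^2$ dissipation absorbs it (the paper's ``completing squares'' is exactly your Young's inequality step), invoking compactness of $\mc{S}$ and boundedness of $\by_{\rm d}$ to get $\|\bk_{\rm cbf}\|_\infty<\infty$, and finishing with the comparison lemma and \eqref{eq:tracking-lyap}. No gaps.
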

\begin{proof}
    First, note that $\bk(\bx,t)$ from \eqref{eq:qp-multi-soln} may be expressed as $\bk(\bx,t) = \bk_{\rm{d}}(\bx,t) + \bk_{\rm{cbf}}(\bx,t)$.  
    Hence, computing the derivative of $V$ the closed-loop system yields:
    \begin{equation*}
        \begin{aligned}
            \dot{V}(\bx,t)  = & 
            \pdv{V}{t}(\bx,t) + \pdv{V}{\bx}(\bx,t)\big[\bf(\bx) + \bg(\bx)\bk_{\rm{d}}(\bx,t) \big] \\
            & + \pdv{V}{\bx}(\bx,t)\bg(\bx)\bk_{\rm{cbf}}(\bx,t).
        \end{aligned}
    \end{equation*}
    Upper bounding the above using \eqref{eq:tracking-kd} yields:
    \begin{equation}
        \begin{aligned}
            \dot{V}(\bx,t) \leq & -\gamma V(\bx,t) - \sigma\left\Vert\pdv{V}{\bx}(\bx,t)\bg(\bx)\right\Vert^2 \\ &+ \left\Vert\pdv{V}{\bx}(\bx,t)\bg(\bx)\bk_{\rm{cbf}}(\bx,t)\right\Vert,
        \end{aligned}
    \end{equation}
    which, using Cauchy-Schwarz, can be further bounded as:
    \begin{equation}
        \begin{aligned}
            \dot{V}(\bx,t) \leq & -\gamma V(\bx,t) - \sigma\left\Vert\pdv{V}{\bx}(\bx,t)\bg(\bx)\right\Vert^2 \\ &+ \left\Vert\pdv{V}{\bx}(\bx,t)\bg(\bx)\right\Vert\cdot\|\bk_{\rm{cbf}}(\bx,t)\|.
        \end{aligned}
    \end{equation}
    Completing squares and further bounded then yields:
    \begin{equation}
        \dot{V}(\bx,t) \leq -\gamma V(\bx,t) + \frac{\|\bk_{\rm{cbf}}(\bx,t)\|^2}{4\sigma}.
    \end{equation}
    Let $t\mapsto\bx(t)$ be a trajectory of the closed-loop system, which satisfies $\bx(t)\in\mc{S}$ for all $t\geq0$ by Theorem \ref{thm:safety}. As $\mc{S}$ is compact, $\bx(t)$ is bounded. Further, since $\bk_{\rm{cbf}}$ is continuous and $\by_{\rm{d}}(t)$ and all of its derivatives are bounded, we have $\|\bk_{\rm{cbf}}(\bx(t),t) \|_{\infty}<\infty$.
    Hence, the derivative of $V$ along trajectories of the closed-loop system satisfies:
    \begin{equation}
        \begin{aligned}
            \dot{V}(\bx,t) \leq -\gamma V(\bx,t) + \frac{\|\bk_{\rm{cbf}}(\bx(t),t) \|_{\infty}^2}{4\sigma}.
        \end{aligned}
    \end{equation}
    Invoking the Comparison Lemma \cite[Lem. 3.4]{Khalil} implies:
    \begin{equation}
        V(\bx(t),t) \leq V(\bx(0),0)e^{-\gamma t} + \frac{\|\bk_{\rm{cbf}}(\bx(t),t) \|_{\infty}^2}{4\gamma\sigma},
    \end{equation}
    and rearranging terms using \eqref{eq:tracking-lyap} yields \eqref{eq:tracking-bound}, as desired.
\end{proof}

Proposition \ref{prop:tracking} illustrates that the tracking error is input-to-state stable with respect to intervention from the safety filter, viewed here as a disturbance to the nominal tracking objective. Establishing this result required assuming that $\mc{S}$ is compact. This will often be the case when $\sum_{i=1}^mr_i=n$ (i.e., when the output and its derivatives account for the entire state of the system), but may not hold in general. In such a case, one may impose the somewhat more restrictive assumption that $\bx(t)$ is bounded apriori to establish the same result. 

\section{Numerical Examples}\label{sec:sims}
\subsection{Planar Drone}\label{sec:planar-drone}
We illustrate our results on a planar drone with state $\bx=(x,z,\theta,\dot{x},\dot{z},\dot{\theta})\in\R^6$ capturing the horizontal position, vertical position, and orientation of the drone along with their velocities. The dynamics \eqref{eq:control-affine-system} of the system are described by:
\begin{equation}\label{eq:drone}
    \bf(\bx) = 
    \begin{bmatrix}
        \dot{x} \\ \dot{z} \\ \dot{\theta} \\ 0 \\ -g \\ 0
    \end{bmatrix},
    \quad 
    \bg(\bx) = 
    \begin{bmatrix}
        0 & 0 \\ 0 & 0 \\ 0 & 0 \\ -\sin(\theta) & 0 \\ \cos(\theta) & 0 \\ 0 & 1
    \end{bmatrix},
\end{equation}
where the input $\bu=(F,M)\in\R^2$ represents the total thrust and moment produced by the propellers. Our objective is to constrain the height of the quadrotor within the bounds $z\in[z_{\min}, z_{\max}]$, where $z_{\min} < z_{\max}$. We formalize this objective by defining the output $\by(\bx)=(z,\theta)$, where the need to include $\theta$ as an output variable will become clear shortly, along with the additional output constraints $\theta\in[\theta_{\min},\theta_{\max}]$, 
where $\theta_{\min}<\theta_{\max}$. 
The associated decoupling matrix is:
\begin{equation*}
    \bB(\bx) = \begin{bmatrix}
        \cos(\theta) & 0 \\ 
        0 & 1
    \end{bmatrix},
\end{equation*}
from which one may verify that $\by$ has vector relative degree $\br=(2,2)$ on the set $\mathcal{E}\coloneqq\R^2\times(-\tfrac{\pi}{2},\tfrac{\pi}{2})\times\R^3$. Hence, provided that $\theta_{\min}>-\tfrac{\pi}{2}$ and $\theta_{\max}<\tfrac{\pi}{2}$ we have:
\begin{equation*}
    \mc{C} = \{\bx\in\R^n\,:\,z_{\min}\leq z\leq z_{\max},\,\theta_{\min}\leq \theta\leq \theta_{\max}\},
\end{equation*}
as the overall state constraint set \eqref{eq:state-constraint-set}. Following the procedure from Sec. \ref{sec:main}, these output constraints are used to construct two pairs of second-order ECBFs with $\bm{\alpha}=\bm{\alpha}_1=\bm{\alpha}_2=(1,2)$ which produces a polynomial as in \eqref{eq:ECBF-poly} with real negative roots $\nu_1=\nu_2=-1$. Moreover, since $\by$ has a vector relative degree on $\mathcal{E}$, $z_{\max}>z_{\min}$, $\theta_{\max}>\theta_{\min}$, and $\alpha_1^1=\alpha_1^2=1>0$, Lemma \ref{lemma:ECBF-compat} ensures that the resulting ECBF constraints are compatible while Lemma \ref{lemma:lipschitz} ensures the corresponding QP-based controller \eqref{eq:ecbf-qp-multi-1} is locally Lipschitz continuous. Rather than numerically solving \eqref{eq:ecbf-qp-multi-1}, however, we leverage the closed-form solution from Theorem \ref{thm:closed-form} by selecting the weight in the quadratic cost as $\bG(\bx)=\bB(\bx)\T\bB(\bx)$. This safety filter is used to modify the inputs of a nominal controller that attempts to drive the drone to different setpoints outside of the constraint set.

The results of applying this safety filter to the drone are provided in Fig. \ref{fig:drone}, where the top left plot illustrates the evolution of $z(t)$, the top right plot illustrates the evolution of $\theta(t)$, the bottom left plot illustrates the control signals, and the bottom right plot illustrates the evolution of the ECBF constraints. As guaranteed by Theorem \ref{thm:safety} the outputs remain within their prescribed bounds, represented as the dashed black lines Fig. \ref{fig:drone}. Here, the desired set point, represented by the dotted red curve in the top left plot of Fig. \ref{fig:drone}, is periodically changed from $\bx=(1,1,\bzero_{4})$ to $\bx=(-1,0.2,\bzero_{4})$ so that the drone must safely change its orientation to move from one location to another. The satisfaction of the corresponding ECBF constraints (i.e., the left-hand side of the inequalities in \eqref{eq:ecbf-qp-multi-2}) is shown in Fig. \ref{fig:drone} (bottom right), where the values of the constraints remain positive for all time in accordance with the results of Theorem \ref{thm:closed-form}. 

\begin{figure}
    \centering
    \includegraphics{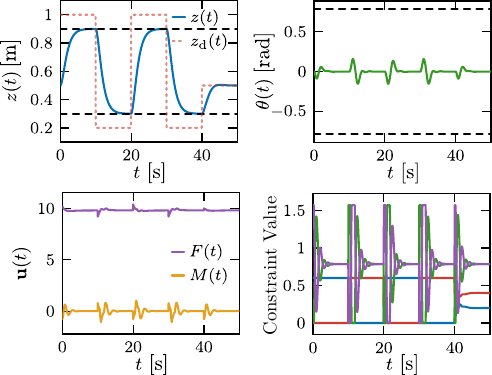}
    \vspace{-0.2cm}
    \caption{Closed-loop trajectories of the drone example from Sec. \ref{sec:planar-drone} showing the drone's vertical position (top left), orientation (top right), inputs (bottom left), and ECBF constraints (bottom right).}
    \label{fig:drone}
\end{figure}

\subsection{More Complex Constraints via Reduced-Order Models}\label{sec:drone-rom}
One interesting aspect of the previous example is that, although the objective was to constrain $z$, achieving a relative degree required taking $\theta$ as an output. With only two inputs, this makes simultaneously constraining the drone's horizontal and vertical position challenging since the orientation must inevitably consume one of the outputs. Perhaps somewhat discouraging, then, is that the evolution of the orientation in Fig. \ref{fig:drone} comes nowhere near the critical states where the output's vector relative degree vanishes, raising if it is truly necessary to take the orientation as an output.

One solution to the issue presented above is to develop a CBF-based controller for a reduced-order model (RoM) of the drone, and then track these commands on the full-order model using a certified tracking controller (i.e., one with a corresponding Lyapunov function). A full theoretical characterization of this approach can be found in \cite{CohenARC24,CohenACC25} and is omitted here in the interest of space, but can be summarized as: the full-order model is safe if it tracks the commands generated by the RoM ``well enough." Inspired by the differential flatness properties of drones \cite{MellingerICRA2011}, we consider the two-dimensional double integrator with gravity:
\begin{equation}\label{eq:double-integrator}
    \dot{\bz} = \begin{bmatrix}
        \dot{x} \\ \dot{z} \\ 0 \\ -g
    \end{bmatrix}
    +
    \begin{bmatrix}
        0 & 0 \\ 0 & 0 \\ 1 & 0 \\ 0 & 1
    \end{bmatrix}
    \bv,
\end{equation}
with state $\bz=(x,x,\dot{x},\dot{z})\in\R^4$ and input $\bv\in\R^2$ representing the drone's acceleration as a RoM of \eqref{eq:drone}. The output $\by(\bz)=(x,z)$ has vector relative degree $\br=(2,2)$, implying that constraints of the form \eqref{eq:output-constraints} may be imposed on this system using CBFs to ensure that:
\begin{equation*}
    \bz(t)\in\mc{C} = \{\bz\in\R^4\,:\,x_{\min}\leq x \leq x_{\max},z_{\min}\leq z \leq z_{\max}\}
\end{equation*}
for all $t\geq0$. These constraints are used to construct ECBFs with the same parameters as in Sec. \ref{sec:planar-drone}, and yields a closed-form safety filter \eqref{eq:qp-multi-soln}, denoted by $\bk_{\bz}\,:\,\R^4\rightarrow\R^2$, for \eqref{eq:double-integrator}. The accelerations generated by the double integrator's safety filter are converted to commands for the drone \eqref{eq:drone} via \cite{BahatiCDC25}:
\begin{equation}
    \begin{aligned}
        F = & \begin{bmatrix}
            -\sin(\theta) \\ \cos(\theta)
        \end{bmatrix}
        ^{\dagger}\bk_{\bz}(x,z,\dot{x},\dot{z}), \\ 
        \theta_{\rm{d}} = & \atantwo\left(\bk_{\bz,1}(x,z,\dot{x},\dot{z}), \bk_{\bz,2}(x,z,\dot{x},\dot{z})\right),
    \end{aligned}
\end{equation}
where $(\cdot)^{\dagger}$ denotes the left pseudo-inverse, and $\theta_{\rm{d}}$ is a desired orientation that is tracked by a PD controller to produce the corresponding moment $M$. Provided this desired orientation is tracked ``well enough," then this approach ensures safety of \eqref{eq:drone}, with more technical details available in \cite{CohenACC25,BahatiCDC25}.

The results of applying this controller to the drone in \eqref{eq:drone} are illustrated in Fig. \ref{fig:drone-rom}. Here, the drone's trajectory (blue curve) is shown to stay inside the constraint set, indicated by the dashed black lines in Fig. \ref{fig:drone-rom} (top left). This is also reflected in the bottom right plot of Fig. \ref{fig:drone-rom}, where the values of the four ECBFs remain nonnegative for all time. Note that the desired setpoints for the drone (purple dots in Fig. \ref{fig:drone-rom} (top left)) are chosen such that the drone approaches the corners of the constraint set, where the ECBFs constraining the horizontal and vertical position are active simultaneously. This is indicated in Fig. \ref{fig:drone-rom} (top right), where two out of the four Lagrange multipliers from \eqref{eq:qp-multi-soln} are active on certain time intervals, highlighting the ability of the proposed approach to handle multiple CBF constraints simultaneously. 

\begin{figure}
    \centering
    \includegraphics{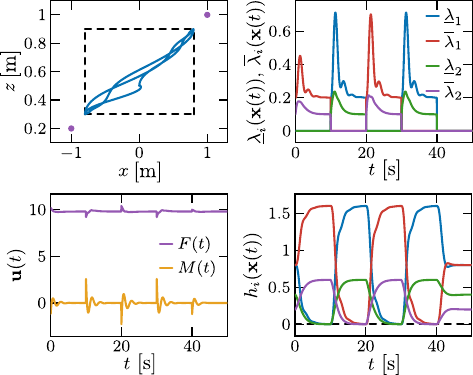}
    \vspace{-0.2cm}
    \caption{Closed-loop trajectories of the drone example from Sec. \ref{sec:drone-rom} showing the drone's $(x,z)$ position (top left), the Lagrange multipliers associated with the closed-form QP controller from \eqref{eq:qp-multi-soln} (top right), the control inputs (bottom left), and the ECBFs (bottom right).}
    \label{fig:drone-rom}
\end{figure}

\section{Conclusions}\label{sec:conclusions}
This paper proposed a framework for the constrained control of nonlinear systems using CBFs. By focusing on the special case of box constraints on the system outputs, we derived conditions under which the CBFs enforcing these constraints are compatible and when the resulting QP controller admits a closed-form solution. We additionally presented results on how nominal tracking objectives degrade in the presence of CBF constraints. Future research directions include generalizing to broader classes of output constraints and incorporating input bounds.

\bibliographystyle{ieeetr}
\bibliography{biblio}

\begin{thebibliography}{10}

\bibitem{AmesTAC17}
A.~D. Ames, X.~Xu, J.~W. Grizzle, and P.~Tabuada, ``Control barrier function based quadratic programs for safety critical systems,'' {\em IEEE Trans. Autom. Control}, vol.~62, no.~8, pp.~3861--3876, 2017.

\bibitem{BorelliBemporadMorari}
F.~Borelli, A.~Bemporad, and M.~Morari, {\em Predictive control for linear and hybrid systems}.
\newblock Cambridge University Press, 2017.

\bibitem{KolmanovskyAutomatica17}
E.~Garone, S.~D. Cairano, and I.~Kolmanovsky, ``Reference and command governors for systems with constraints: A survey on theory and applications,'' {\em Automatica}, vol.~75, pp.~306--328, 2017.

\bibitem{BansalCDC17}
S.~Bansal, M.~Chen, S.~Herbert, and C.~J. Tomlin, ``{Hamilton}-{Jacobi} reachability: {A} brief overview and recent advances,'' in {\em Proc. Conf. Decis. Control}, pp.~2242--2253, 2017.

\bibitem{AmesCSM23}
K.~P. Wabersich, A.~J. Taylor, J.~J. Choi, K.~Sreenath, C.~J. Tomlin, A.~D. Ames, and M.~N. Zeilinger, ``Data-driven safety filters: {Hamilton}-{Jacobi} reachability, control barrier functions, and predictive methods for uncertain systems,'' {\em IEEE Control Systems Magazine}, vol.~43, no.~5, pp.~137--177, 2023.

\bibitem{FisacARCRAS23}
K.~C. Hsu, H.~Hu, and J.~F. Fisac, ``The safety filter: A unified view of safety-critical control in autonomous systems,'' {\em Annual Review of Control, Robotics, and Autonomous Systems}, vol.~7, 2023.

\bibitem{GurrietICCPS18}
T.~Gurriet, A.~Singletary, J.~Reher, L.~Ciarletta, E.~Feron, and A.~Ames, ``Towards a framework for realizable safety critical control through active set invariance,'' in {\em Proc. ACM/IEEE Int. Conf. Cyber-Physical Syst.}, pp.~98--106, 2018.

\bibitem{WangTRO17}
L.~Wang, A.~D. Ames, and M.~Egerstedt, ``Safety barrier certificates for collisions-free multirobot systems,'' {\em IEEE Trans. Robot}, vol.~33, no.~3, pp.~661--674, 2017.

\bibitem{GlotfelterLCSS17}
P.~Glotfelter, J.~Cort\'{e}s, and M.~Egerstedt, ``Nonsmooth barrier functions with applications to multi-robot systems,'' {\em IEEE Contr. Syst. Lett.}, vol.~1, no.~2, pp.~310--315, 2017.

\bibitem{LarsLCSS19}
L.~Lindemann and D.~V. Dimarogonas, ``Control barrier functions for signal temporal logic tasks,'' {\em IEEE Contr. Syst. Lett.}, vol.~3, no.~1, pp.~96--101, 2019.

\bibitem{TamasLCSS23}
T.~G. Molnar and A.~D. Ames, ``Composing control barrier functions for complex safety specifications,'' {\em IEEE Contr. Syst. Lett.}, vol.~7, pp.~3615--3620, 2023.

\bibitem{BlackCDC23}
M.~Black and D.~Panagou, ``Adaptation for validation of consolidated control barrier functions,'' in {\em Proc. Conf. Decis. Control}, pp.~751--757, 2023.

\bibitem{XuAutomatica18}
X.~Xu, ``Constrained control of input–output linearizable systems using control sharing barrier functions,'' {\em Automatica}, vol.~87, 2018.

\bibitem{CortezLCSS22}
W.~S. Cortez, X.~Tan, and D.~V. Dimarogonas, ``A robust, multiple control barrier function framework for input constrained systems,'' {\em IEEE Contr. Syst. Lett.}, no.~6, pp.~1742--1747, 2022.

\bibitem{LavretskyACC25}
E.~Lavretsky and M.~Menner, ``Servo-controllers for linear time-invariant systems with operational constraints,'' in {\em Proc. Amer. Control Conf.}, 2025.
\newblock to appear.

\bibitem{CortezAutomatica22}
W.~S. Cortez and D.~V. Dimarogonas, ``Safe-by-design control for {Euler}–{Lagrange} systems,'' {\em Automatica}, vol.~146, p.~110620, 2022.

\bibitem{TanCDC22}
X.~Tan and D.~V. Dimarogonas, ``Compatibility checking of multiple control barrier functions for input constrained systems,'' in {\em Proc. Conf. Decis. Control}, pp.~939--944, 2022.

\bibitem{BreedenACC23}
J.~Breeden and D.~Panagou, ``Compositions of multiple control barrier functions under input constraints,'' in {\em Proc. Amer. Control Conf.}, pp.~3688--3695, 2023.

\bibitem{IsalyTAC24}
A.~Isaly, M.~Ghanbarpour, R.~G. Sanfelice, and W.~E. Dixon, ``On the feasibility and continuity of feedback controllers defined by multiple control barrier functions,'' {\em IEEE Trans. Autom. Control}, vol.~69, no.~11, pp.~7326--7339, 2024.

\bibitem{Isidori}
A.~Isidori, {\em Nonlinear Control Systems}.
\newblock Springer, third~ed., 1995.

\bibitem{jankovic2018robust}
M.~Jankovic, ``Robust control barrier functions for constrained stabilization of nonlinear systems,'' {\em Automatica}, vol.~96, pp.~359--367, 2018.

\bibitem{SreenathACC16}
Q.~Nguyen and K.~Sreenath, ``Exponential control barrier functions for enforcing high relative-degree safety-critical constraints,'' in {\em Proc. Amer. Control Conf.}, pp.~322--328, 2016.

\bibitem{Bertsekas}
D.~P. Bertsekas, {\em Nonlinear Programming}.
\newblock Athena Scientific, third~ed., 2016.

\bibitem{HagerSIAM79}
W.~W. Hager, ``Lipschitz continuity of constrained processes,'' {\em SIAM J. Control and Optimization}, vol.~17, no.~3, pp.~321--338, 1979.

\bibitem{DevLCSS23}
D.~R. Agrawal and D.~Panagou, ``Safe and robust observer-controller synthesis using control barrier functions,'' {\em IEEE Contr. Syst. Lett.}, vol.~7, pp.~127--132, 2023.

\bibitem{Blanchini}
F.~Blanchini and S.~Miani, {\em Set-theoretic methods in control}.
\newblock Springer, 2008.

\bibitem{Boyd}
S.~Boyd and L.~Vandenberghe, {\em Convex Optimization}.
\newblock Cambridge University Press, 2004.

\bibitem{Khalil}
H.~K. Khalil, {\em Nonlinear Systems}.
\newblock Prentice Hall, 3~ed., 2002.

\bibitem{CohenARC24}
M.~H. Cohen, T.~G. Molnar, and A.~D. Ames, ``Safety-critical control for autonomous systems: Control barrier functions via reduced order models,'' {\em Annual Reviews in Control}, vol.~57, p.~100947, 2024.

\bibitem{CohenACC25}
M.~H. Cohen, N.~Csomay-Shanklin, W.~D. Compton, T.~G. Molnar, and A.~D. Ames, ``Safety-critical controller synthesis with reduced-order models,'' in {\em Proc. Amer. Control Conf.}, pp.~5216--5221, 2025.

\bibitem{MellingerICRA2011}
D.~Mellinger and V.~Kumar, ``Minimum snap trajectory generation and control for quadrotors,'' in {\em Proc. Int. Conf. Robot. and Autom.}, pp.~2520--2525, 2011.

\bibitem{BahatiCDC25}
G.~Bahati, R.~K. Cosner, M.~H. Cohen, R.~M. Bena, and A.~D. Ames, ``Control barrier function synthesis for nonlinear systems with dual relative degree,'' {\em arXiv preprint arXiv:2504.00397}, 2025.

\end{thebibliography}

\section*{Acknowledgments}
We gratefully acknowledge Dr. Heather Hussain, whose continuous feedback on the ideas presented herein has greatly improved the quality of this work.

\end{document}